\newcommand{\revise}[1]{{#1}}
\newtheorem{example}{\textbf{Example}}
\newtheorem{theorem}{\textbf{Theorem}}
\newtheorem{corollary}{Corollary}
\newtheorem{definition}{\textbf{Definition}}
\newtheorem{lemma}{\textbf{Lemma}}
\newcommand{\my@arrow}[1]{\ooalign{$#1-\mkern-5mu-$\cr\hidewidth$#1>$}}
\newcommand{\myarrow}{\mathrel{\mathpalette\my@arrow\relax}}
\newcommand\blfootnote[1]{%
  \begingroup
  \renewcommand\thefootnote{}\footnote{#1}%
  \addtocounter{footnote}{-1}%
  \endgroup
}
\begin{document}

\title{Minimizing the Influence of Misinformation \\via Vertex Blocking}

\author{
{
Jiadong Xie$^{\S\dagger}$, Fan Zhang$^{\star}$, Kai Wang$^{\sharp}$, Xuemin Lin$^{\sharp}$, Wenjie Zhang$^{\S}$}
\vspace{3.2mm}
\\
\fontsize{10}{10}
\selectfont\itshape
$^\S$University of New South Wales~
$^\dagger$Zhejiang Lab~
$^\star$Guangzhou University\\
$^\sharp$Antai College of Economics and Management, Shanghai Jiao Tong University
\fontsize{10}{10}
\selectfont\itshape
\\
\fontsize{9}{9} \selectfont\ttfamily\upshape
xiejiadong0623@gmail.com, zhangf@gzhu.edu.cn\\
\{w.kai,~xuemin.lin\}@sjtu.edu.cn, zhangw@cse.unsw.edu.au
}

\maketitle

\begin{abstract}

Information cascade in online social networks can be rather negative, e.g., the spread of rumors may trigger panic. To limit the influence of misinformation in an effective and efficient manner, the influence minimization (IMIN) problem is studied in the literature: given a graph $G$ and a seed set $S$, blocking at most $b$ vertices such that the influence spread of the seed set is minimized.
In this paper, we are the first to prove the IMIN problem is NP-hard and hard to approximate.
Due to the hardness of the problem, existing works resort to greedy solutions and use Monte-Carlo Simulations to solve the problem.
However, they are cost-prohibitive on large graphs since they have to enumerate all the candidate blockers and compute the decrease of expected spread when blocking each of them.
To improve the efficiency, we propose the AdvancedGreedy algorithm (AG) based on a new graph sampling technique that applies the dominator tree structure, which can compute the decrease of the expected spread of all candidate blockers at once.
Besides, we further propose the GreedyReplace algorithm (GR) by considering the relationships among candidate blockers.
Extensive experiments on 8 real-life graphs demonstrate that our AG and GR algorithms are significantly faster than the state-of-the-art by up to 6 orders of magnitude, and GR can achieve better effectiveness with time cost close to AG.
\end{abstract}

\begin{IEEEkeywords}
Influence Spread, Misinformation, Independent Cascade, Graph Algorithms, Social Networks
\end{IEEEkeywords}




\blfootnote{$^*$ Fan Zhang is the corresponding author.}


\section{Introduction}
\label{sec:intros}

With the prevalence of social network platforms such as Facebook and Twitter, a large portion of people is accustomed to expressing their ideas or communicating with each other online. 
Users in online social networks receive not only positive information (e.g., new ideas and innovations) \cite{positive-inf}, but also negative messages (e.g., rumors and fake science) \cite{min-greedy2}. 
In fact, misinformation like rumors spread fast in social networks \cite{rumour-quick}, and can form more clusters compared with positive information \cite{anti-vacc}, which should be limited to avoid `bad' consequences. 
For example, the opposition to vaccination against SARS-CoV-2 (causal agent of COVID-19) can amplify the outbreaks \cite{covid19-2}. The rumor of White House explosions that injured President Obama caused a \$136.5 billion loss in the stock market~\cite{fakenews}. 
Thus, it is critical to efficiently minimize the influence spread of misinformation. 

We can model the social networks as graphs, where vertices represent users and edges represent their social connections. The influence spread of misinformation can be modeled as the expected spread under diffusion models, e.g., the independent cascade (IC) model~\cite{first-max}.
The strategies in existing works on spread control of misinformation can be divided into two categories: (i) blocking vertices \cite{min-greedy1,min-greedy-tree, FanLWTMB13, Nature-error, Viruses}, which usually removes some critical users in the networks such that the influence of the misinformation can be limited; or blocking edges \cite{min-edge1,min-edge2,WangDLYJY20,abs-1901-02156}, which removes a set of edges to stop the influence spread of misinformation; (ii) spreading positive information \cite{min-greedy2,LeeSMH19,seed-positive,HeSCJ12}, which considers amplifying the spread of positive information to fight against the influence of misinformation.

In this paper, we consider blocking key vertices in the graph to control the spread of misinformation.
Suppose a set of users are already affected by misinformation and they may start the propagation, we have a budget for blocking cost, i.e., the maximum number of users that can be blocked.
Then, we study the influence minimization problem \cite{min-greedy1,min-greedy-tree}: given a graph $G$, a seed set $S$ and a budget $b$, find a blocker set $B^*$ with at most $b$ vertices such that the influence (i.e., expected spread) from $S$ is minimized. Note that blocking vertices is the most common strategy for hindering influence propagation. For example, in social networks, disabling user accounts or preventing the sharing of misinformation is easy to implement. According to the statistics, Twitter has deleted 125,000 accounts linked to terrorism~\cite{twitter-delete}.
Obviously, we cannot block too many accounts, it will lead to negative effect on user experience. In such cases, it is critical to identify a user set with the given size whose blocking effectively hinders the influence propagation.



\vspace{1mm}
\noindent \textbf{Challenges and Existing Solutions.}
The influence minimization problem is NP-hard and hard to approximate, and we are the first to prove them (Theorems~\ref{theo::NP-IM} and~\ref{theo::APX}).
Due to the hardness of the problem, the state-of-the-art solutions use a greedy framework to select the blockers \cite{min-greedy1,min-greedy2}, which outperforms other existing heuristics \cite{min-greedy-tree,Nature-error,Viruses}.
However, different to the influence maximization problem, the spread function of our problem is not supermodular (Theorem~\ref{theo::expectedspread}), which implies that an approximation guarantee may not exist for existing greedy solutions.
\revise{Moreover, as the computation of influence spread under the IC model is \#P-hard~\cite{maximization1}, the state-of-the-art solutions use Monte-Carlo Simulations to compute the influence spread.
However, such methods are cost-prohibitive on large graphs since there are excessive candidate blockers and they have to compute the decrease of expected spread for every candidate blocker (detailed in Section~\ref{sec:ec-exist}).}


\vspace{1mm}
\noindent \textbf{Our Solutions.}
\revise{
Different to the state-of-the-art solutions (the greedy algorithms with Monte-Carlo Simulations), we propose a novel algorithm (GreedyReplace) based on sampled graphs and their dominator trees.
Inspired by reverse influence sampling \cite{Borgs-max}, the main idea of the algorithm is to simultaneously compute the decrease of expected spread of every candidate blocker, which uses almost a linear scan of each sampled graph.
We prove that the decrease of the expected spread from a blocked vertex is decided by the subtrees rooted at it in the dominator trees that generated from the sampled graphs (Theorem~\ref{theo:dominator-subtree}).
Thus, instead of using Monte-Carlo Simulations, we can efficiently compute the expected spread decrease through sampled graphs and their dominator trees.} 
We also prove the estimation ratio is theoretically guaranteed given a certain number of samples (Theorem~\ref{theorem:approx}). 
Equipped with above techniques, we first propose the AdvancedGreedy algorithm, which has a much higher efficiency than the state-of-the-art greedy method without sacrificing its effectiveness.

Furthermore, for the vertex blocking strategy, we observe that all out-neighbors of the seeds will be blocked if the budget is unlimited, while the greedy algorithm may choose the vertices that are not the out-neighbors as the blockers and miss some important candidates. {We then propose a new heuristic, named the GreedyReplace algorithm, focusing on the relationships among candidate blockers: we first consider blocking vertices by limiting the candidate blockers in the out-neighbors, and then try to greedily replace them with other vertices if the expected spread becomes smaller.}

\vspace{1mm}
\noindent \textbf{Contributions.}
Our principal contributions are as follows.

\begin{itemize}

\item 
     We are the first to prove the Influence Minimization problem is NP-hard and APX-hard unless P=NP. 
     
\item 

\revise{
We propose the first method to estimate the influence spread decreased by every candidate blocker under IC model, which only needs a simple scan on the dominator tree of each sampled graph.  
We prove an estimation ratio is guaranteed given a certain number of sampled graphs.
To the best of our knowledge, we are the first to study the dominator tree in influence related problems.
}

    
\item  \revise{Equipped with the above estimation technique, our AdvancedGreedy algorithm significantly outperforms the state-of-the-art greedy algorithms in efficiency without sacrificing effectiveness. We also propose a superior heuristic, the GreedyReplace algorithm, to further refine the effectiveness. }


\item Comprehensive experiments on $8$ real-life datasets validate that our AdvancedGreedy algorithm is faster than the state-of-the-art (the greedy algorithm with Monte-Carlo Simulations) by more than $3$ orders of magnitude, and our GreedyReplace algorithm can achieve better result quality (i.e., the smaller influence spreads) and close efficiency compared with our AdvancedGreedy algorithm. 
\end{itemize}
\section{Related Work}
\label{sec:relate}

\noindent \textbf{Influence Maximization.}
The studies of influence maximization are surveyed in~\cite{DBLP:journals/computing/AghaeeGBBF21,DBLP:journals/kais/BanerjeeJP20}.
Domingos et al. first study the influence between individuals for marketing in social networks \cite{value-customer}. Kempe et al. first model this problem as a discrete optimization problem \cite{first-max}, named Influence Maximization (IMAX) Problem. They introduce the independent cascade (IC) and linear threshold (LT) diffusion models, and propose a greedy algorithm with $(1-1/e)$-approximation ratio since the function is submodular under the above models. Borgs et al. propose a different method based on reverse reachable set for influence maximization under the IC model \cite{Borgs-max}. Tang et al. propose an algorithm based on martingales for IMAX problem, with a near-linear time cost \cite{max-sota}. 

\vspace{1mm}
\noindent \textbf{Influence Minimization.}
Compared with IMAX problem, there are fewer studies on controlling the spread of misinformation, as surveyed in~\cite{DBLP:journals/jnca/ZareieS21}.
Most works consider proactive measures (e.g., blocking nodes or links) to minimize the influence spread, motivated by the feasibility on structure change for influence study \cite{struct-sn,experience-min1,user-attribute2}. 
In real networks, we may use a degree based method to find the key vertices \cite{Nature-error,Viruses}.
Yao et al. propose a heuristic based on betweenness and out-degree to find approximate solutions \cite{YaoSZWG15}.
Wang et al. propose a greedy algorithm to block a vertex set for influence minimization (IMIN) problem under IC model \cite{min-greedy1}.
Yan et al. also propose a greedy algorithm to solve the IMIN problem under different diffusion models, especially for IC model \cite{min-greedy-tree}. 
They also introduce a dynamic programming algorithm to compute the optimal solution on tree networks. 
The above studies on the IMIN problem validate that the greedy heuristic is more effective than other methods, e.g., degree based heuristics \cite{min-greedy1,min-greedy-tree}.

Kimura et al. propose to minimize the dissemination of negative information by blocking links (i.e., finding $k$ edges to remove) \cite{min-edge1}. They propose an approximate solution for rumor blocking based on the greedy heuristic.
Other than IC model, the vertex and edge interdiction problems were studied under other diffusion models: \cite{min-edge2,NguyenCVD20,KhalilDS14} consider the LT (Linear Threshold) model, \cite{TongPEFF12} considers the SIR (Susceptible-Infected-Recovery) model and \cite{MedyaSS22,abs-1901-02156} considers CD (Credit Distribution) Model.

In addition, there are some other strategies to limit the influence spread. Budak et al. study the simultaneous spread of two competing campaigns (rumor and truth) in a network \cite{min-greedy2}. They prove this problem is NP-hard and provide a greedy algorithm which can be applied to the IMIN problem.
Manouchehri et al. then propose a solution with theoretically guaranteed for this problem~\cite{ManouchehriHD21}.
Moreover, Chen et al. propose the profit minimization of misinformation problem, which not only considers the number of users but also focus on interaction effects between users. As interaction effects are different between different users and the related profit obtained from interaction activities may also be different \cite{ChenLFGD19}.
Lee et al. also consider that both positive and negative opinions are propagating in a social network. Their strategy is to reduce the positive influence near the steady vertices and increase the influence in the vacillating region \cite{LeeSMH19}.
Tong et al. propose the rumor blocking problem to find $k$ seed users to spread the truth such that the user set influenced by the rumor is minimized \cite{seed-positive}.
Some works consider more factors into the propagation models, e.g., user experience~\cite{experience-min1}, evolution of user opinions~\cite{SaxenaHLCNT20}.

In this paper, we first focus on the efficient computation of the fundamental IMIN problem under the IC model, without sacrificing the effectiveness compared with the state-of-the-art. We also further improve the quality of results by proposing a new heuristic for choosing the blockers.


\vspace{1mm}
\noindent \textbf{Influence Expected Spread Computation.}
The computation of expected spread is proved to be \#P-hard under IC model \cite{maximization1}. 
Maehara et al. propose the first algorithm to compute influence spread exactly under the IC model \cite{MaeharaSI17}, but it can only be used in small graphs with a few hundred edges. 
Domingos et al. first propose to use the Monte-Carlo Simulations (MCS) to compute the expected spread \cite{first-max}, which repeats simulations until a tight estimation is obtained.
We have to repeatedly run MCS to compute the decrease of influence spread for each candidate blocker, which leads to a large computation cost. 
Borgs et al. propose Reverse Influence Sampling (RIS) \cite{Borgs-max}, which is now widely used in IMAX Problem. Tang et al. then propose the methods to reduce the number of samples for RIS \cite{max-sota}. However, as in our later discussion, we find that RIS is not applicable to our problem (Section~\ref{sec:ec-exist}). 
Our one-time computation of the expected spread on sampled graphs can return the spread decrease of every candidate blocker, which avoids redundant computations compared with MCS.

\section{Preliminaries}
\label{sec:pre}

We consider a directed graph $G=(V,E)$, where $V$ is the set of $n$ vertices (entities), and $E$ is the set of directed edges (influence relations between vertex pairs). Table~\ref{tab:nota} summarizes the notations. When the context is clear, we may simplify the notations by omitting $V$ or $E$.

\begin{table}
    \caption{Summary of Notations}
    \small
    \label{tab:nota}
       \centering
       \resizebox{\linewidth}{!}{
       \begin{tabular}{|p{0.18\columnwidth}|p{0.74\columnwidth}|}\hline
       \rowcolor[gray]{0.95} Notation & Definition\\ \hline\hline
       $G=(V,E)$ & a directed graph with vertex set $V$ and edge set $E$\\ \hline
       $n; m$ & number of vertices/edges in $G$ (assume $m > n$)\\ \hline
       $V(G); E(G)$ & the set of vertices/edges in $G$\\ \hline
       $G[V']$ & the subgraph in $G$ induced by vertex set $V'$\\ \hline
       $N^{in}_u; N^{out}_u$ & the set of in-neighbors/out-neighbors of vertex $u$ \\ \hline
       $d^{in}_u; d^{out}_u$ & the in-degree/out-degree of vertex $u$ \\ \hline
       $S; s$ & the seed set; a seed vertex \\ \hline
       $B$ & the blocker set\\ \hline
       $\theta$ & the number of sampled graphs used in algorithm \\ \hline
       $Pr[x]$ & the probability if $x$ is true\\ \hline
       $\mathbb{E}[x]$ & the expectation of variable $x$\\ \hline
       $p_{u,v}$ & the probability that vertex $u$ activates vertex $v$\\ \hline
       $\mathcal{P}^G(x,S)$ & the probability that vertex $x$ is activated by set $S$ in $G$\\ \hline
       $\mathbb{E}(S,G)$ & the expected spread, i.e., the expected number of activated non-seed vertices in $G$ with seed set $S$ \\ \hline
       \end{tabular}}
\end{table}

\subsection{Diffusion Model}

\label{sec:diffmodel}
Following the existing studies~\cite{min-greedy1,max-rr,min-greedy2} on influence minimization, we focus on the widely-studied independent cascade (IC) model \cite{first-max}. 
It assumes each directed edge $(u,v)$ in the graph $G$ has a propagation probability\footnote{Some existing works can assign or predict the propagation probability of each edge for the IC model, e.g., \cite{first-max,maximization1,predictedge}.} $p_{u,v}\in [0,1]$, i.e., the probability that the vertex $u$ activates the vertex $v$ after $u$ is activated. 

In the IC model, each vertex has two states: inactive or active. We say a vertex is activated if it becomes active by the influence spread.
The model considers an influence propagation process as follows: (i) at timestamp $0$, the seed vertices are activated, i.e., the seeds are now active while the other vertices are inactive; (ii) if a vertex $u$ is activated at timestamp $i$, then for each of its inactive out-neighbor $v$ (i.e., for each inactive $v\in N^{out}_u$), $u$ has $p_{u,v}$ probability to independently activate $v$ at timestamp $i+1$; (iii) an active vertex will not be inactivated during the process; and (iv) we repeat the above 
steps until no vertex can be activated at the latest timestamp.


\subsection{Problem Definition}
\label{sec:define}
To formally introduce the Influence Minimization problem \cite{min-greedy1,min-greedy-tree,YaoSZWG15}, we first define the activation probability of a vertex, which is initialized by $1$ for any seed by default.

\begin{definition}[activation probability]
Given a directed graph $G$, a vertex $x$ and a seed set $S$, the activation probability of $x$ in $G$, denoted by $\mathcal{P}^G(x,S)$, is the probability of the vertex $x$ becoming active. 
\end{definition}

In order to minimize the spread of misinformation, we can block some key non-seed vertices such that they will not be activated in the propagation process.
A blocked vertex is also called a blocker in this paper.

\begin{definition}[blocker]
\revise{Given $G=(V,E)$, a seed set $S$, and a set of blockers $B\subseteq (V\setminus S)$, the influence probability of every edge pointing to a vertex in $B$ is set to $0$, i.e., $p_{u,v} = 0$ for any $u\in N_v^{in}$ if $v\in B$.}
\end{definition}

\revise{The activation probability of a blocker is $0$ because the propagation probability is $0$ for any of its incoming edges.} Then, we define the expected spread to measure the influence of the seed set in the whole graph.

\begin{definition}[expected spread]
Given a directed graph $G$ and a seed set $S$, the expected spread, denoted by $\mathbb{E}(S,G)$, is the expected number of active vertices under the IC model, i.e., $\mathbb{E}(S,G)=\sum_{u\in V(G)}\mathcal{P}^G(u,S)$.
\end{definition}


The expected spread with a blocker set $B$ is represented by $\mathbb{E}(S,G[V\setminus B])$.
The studied problem is defined as follows.



\vspace{2mm} \noindent 
{\bf Problem Statement.} 
Given a directed graph $G=(V,E)$, the influence probability $p_{u,v}$ on each edge $(u,v)$, a seed set $S$ and a budget $b$, the Influence Minimization (IMIN) problem is to find a blocker set $B^*$ with at most $b$ vertices such that the influence (i.e., expected spread) is minimized, i.e., 
$$B^*={\arg \min}_{B\subseteq (V\setminus S),|B|\le b} \mathbb{E}(S,G[V\setminus B]).$$


\begin{figure}[t]
    \centering
        \centering    
        \includegraphics[width=.85\columnwidth]{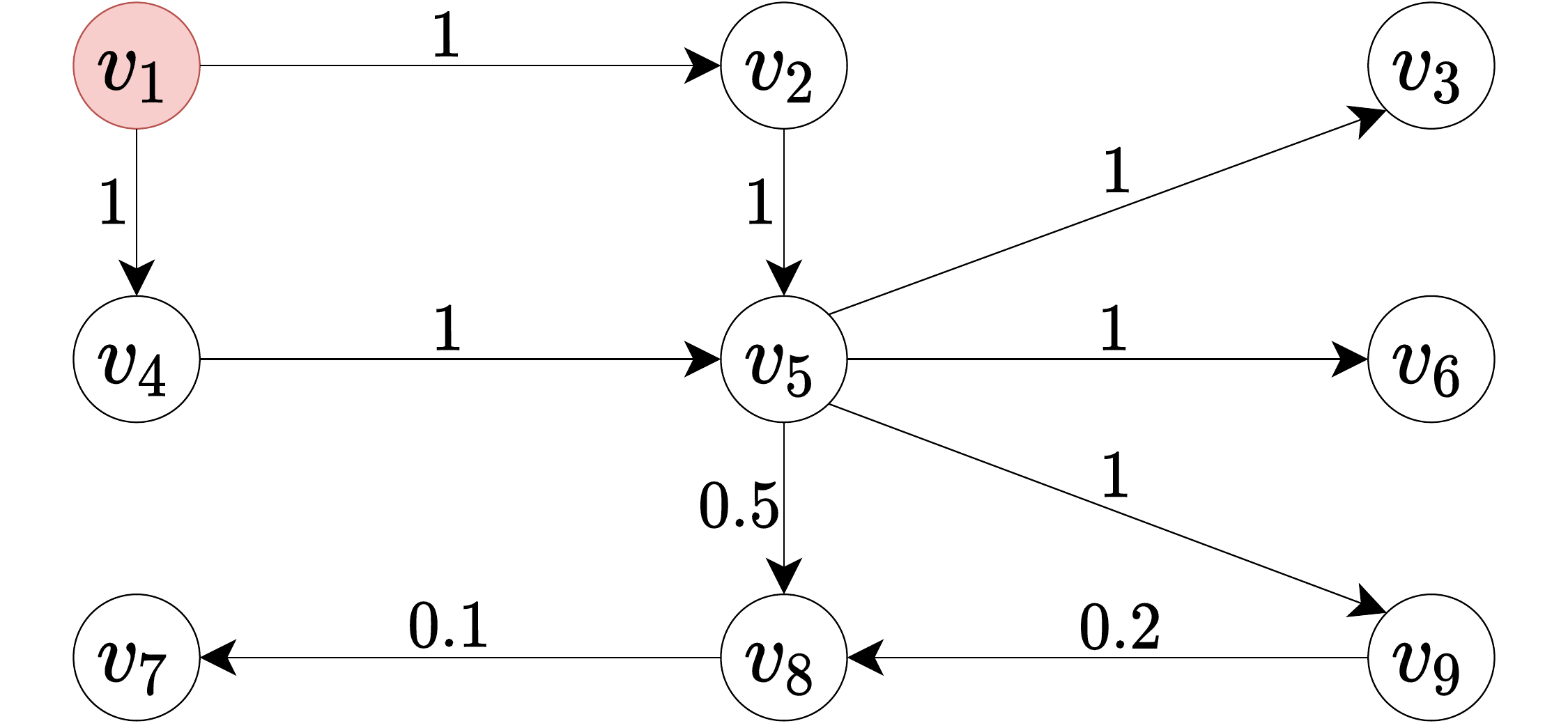}
        
    \caption{A toy graph $G$, where $v_1$ is the seed vertex and the value on each edge indicates its propagation probability.}
\label{fig:sample}
\end{figure}

\begin{example}
\label{example:compute}
Figure~\ref{fig:sample} shows a graph $G = (V, E)$ where $S=\{v_1\}$ is the seed set, and the value on each edge is its propagation probability, e.g., $p_{v_5,v_8}=0.5$ indicates $v_8$ can be activated by $v_5$ with $0.5$ probability if $v_5$ becomes active. 
At timestamps $1$ to $3$, the seed $v_1$ will certainly activate $v_2,v_3,v_4,v_5,v_6$ and $v_9$, as the corresponding activation probability is $1$. 
Because $v_8$ may be activated by either $v_5$ or $v_9$, we have $\mathcal{P}^G(v_8,\{v_1\})=1-(1-p_{v_5,v_8}\cdot\mathcal{P}^G(v_5,\{v_1\}))(1-p_{v_9,v_8}\cdot\mathcal{P}^G(v_9,\{v_1\}))=1-0.5\times 0.8 = 0.6$. 
If $v_8$ is activated, it has $0.1$ probability to activate $v_7$.
Thus, we have $\mathcal{P}^G(v_7,\{v_1\})=p_{v_8,v_7}\cdot\mathcal{P}^G(v_8,\{v_1\})=0.6\times 0.1=0.06$. 
The expected spread is the activation probability sum of all the vertices, i.e., $\mathbb{E}(\{v_1\},G)=7.66$. 
If we block $v_5$, the new expected spread $\mathbb{E}(\{v_1\},G[V\setminus \{v_5\}])=3$. Similarly, we have $\mathbb{E}(\{v_1\},G[V\setminus \{v_2\}])=\mathbb{E}(\{v_1\},G[V\setminus \{v_4\}])=6.66$, and blocking any other vertex also achieves a smaller expected spread than blocking $v_5$. Thus, if the budget $b = 1$, the result of the IMIN problem is $\{v_5\}$.
\end{example}

\section{Problem Analysis}
\label{sec:analysis}

To the best of our knowledge, no existing work has studied the hardness of the Influence Minimization (IMIN) problem, as surveyed in~\cite{DBLP:journals/jnca/ZareieS21}. Thus, we first analyze the problem hardness.

\begin{theorem}
The IMIN problem is NP-Hard.
\label{theo::NP-IM}
\end{theorem}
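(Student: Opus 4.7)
\vspace{1mm}
\noindent\textbf{Proof plan.}
The plan is to give a polynomial-time reduction from the NP-complete $k$-\textsc{Clique} problem to the decision version of IMIN. Given an instance $(H,k)$ with $H=(V_H,E_H)$ and $k\ge 3$, I build $(G,\{s\},b,\tau)$ such that $H$ contains a $k$-clique iff the IMIN instance admits a blocker $B$ with $|B|\le b$ and $\mathbb{E}(\{s\},G[V\setminus B])\le\tau$. Every edge in the construction carries probability $1$, so expected spread coincides with the number of vertices reachable from $s$, keeping the whole argument polynomial.

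I construct $G$ in three layers of gadgets. Introduce a single seed $s$. For each $v\in V_H$ add a primary vertex $x_v$ with edge $(s,x_v)$ of probability $1$. For each edge $e=\{u,w\}\in E_H$ add a witness vertex $y_e$ with edges $(x_u,y_e)$ and $(x_w,y_e)$, both of probability $1$. Finally, attach to every $y_e$ a directed tail $y_e\to t_e^{(1)}\to\cdots\to t_e^{(M)}$ of $M$ fresh vertices joined by probability-$1$ edges, where $M=|V_H|+|E_H|+1$. Set $b=k$ and $\tau=\mathbb{E}(\{s\},G)-\bigl[k+\binom{k}{2}(1+M)\bigr]$.

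For the analysis, consider an arbitrary blocker $B$ with $|B|\le k$, partitioned as $B=B_x\cup B_y\cup B_t$ by vertex class. A tail blocker saves at most $M$ reachable vertices while the corresponding witness blocker saves $M+1$, so WLOG $B_t=\emptyset$. Writing $|B_x|=a$ and letting $V'_a\subseteq V_H$ be its image, the spread reduction (maximised over the choice of $B_y$) equals $R(a,V'_a)=(1+M)k-aM+(1+M)E_H(V'_a)$, where $E_H(V'_a)$ counts edges of $H$ with both endpoints in $V'_a$. Using $E_H(V'_a)\le\binom{a}{2}$ together with the identity $\binom{k}{2}-\binom{a}{2}-(k-a)=(k-a)(k+a-3)/2$ yields $R(a,V'_a)\le R^{\max}$, where $R^{\max}=k+\binom{k}{2}(1+M)$, with equality precisely when $a=k$ and $V'_k$ induces a $k$-clique. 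If $H$ has no $k$-clique, every size-$k$ subset obeys $E_H(V'_k)\le\binom{k}{2}-1$ and every $a<k$ suffers a strict deficit; quantitatively, the gap $R^{\max}-R(a,V'_a)$ is at least $3$ when $k=3$ and at least $1+M$ when $k\ge 4$. Hence the IMIN instance is a YES-instance iff $H$ has a $k$-clique, establishing NP-hardness.

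The main obstacle is the combinatorial case analysis certifying that no mixture of primary, witness, and tail blockers can match the clique saving when $H$ is clique-free; choosing $M$ polynomially large in $|V_H|+|E_H|$ ensures a strictly positive integer gap at every non-clique configuration, so $\tau$ is integer-valued and the reduction runs in polynomial time.
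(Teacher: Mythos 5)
Your reduction is correct, but it takes a genuinely different route from the paper's. The paper reduces from Densest-$k$-Subgraph (DKS): a seed, one gadget vertex $c_i$ per vertex of the source instance, one gadget vertex $d_j$ per edge (with all probabilities $1$), so that blocking $k$ of the $c_i$ removes exactly $k$ plus the number of induced edges from the spread; the optimal blocker set then corresponds directly to the densest $k$-subgraph, with no threshold or gap analysis needed. You use the same two-layer skeleton but reduce from $k$-\textsc{Clique}, append length-$M$ tails to the edge-witnesses, and prove a decision-version equivalence via an exchange argument ($B_t=\emptyset$ WLOG) plus the accounting identity $R(a,V'_a)=(1+M)k-aM+(1+M)E_H(V'_a)$ and the bound $E_H(V'_a)\le\binom{a}{2}$; I checked this arithmetic and it is sound, including the degenerate case where fewer than $k-a$ uncovered witnesses remain (the formula is then only an upper bound, which is all the backward direction needs). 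Two remarks. First, the tails are actually unnecessary: with $M=0$ the reduction already gives total saving $k+E_H(V'_a)$, maximized at $k+\binom{k}{2}$ exactly when the blocked primaries induce a clique, so the gadget and the case analysis could be simplified considerably. Second, and more substantively, the paper's choice of DKS is deliberate: the identical construction is reused in Theorem~\ref{theo::APX} to inherit APX-hardness from the inapproximability of DKS, whereas your $k$-\textsc{Clique} reduction establishes NP-hardness only and would not directly yield that second result. What your version buys in exchange is a reduction from a more classical NP-complete problem with a fully explicit decision threshold.
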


\begin{proof}
We reduce the densest k-subgraph (DKS) problem~\cite{BhaskaraCCFV10}, which is NP-hard, to the IMIN problem. Given an undirected graph $G(V,E)$ with $|V|=n$ and $|E|=m$, and an positive integer $k$, the DKS problem is to find a subset $A\subseteq V$ with exactly $k$ vertices such that the number of edges induced by $A$ is maximized.

Consider an arbitrary instance $G(V=\{v_1,\cdots, v_n\},E=\{e_1,\cdots, e_{m}\})$ of DKS problem with a positive integer $k$, we construct a corresponding instance of IMIN problem on graph $G'$. Figure~\ref{fig:np-hard} shows a construction example from $4$ vertices and $4$ edges.

The graph $G'$ contains three parts: $C$, $D$, and a seed vertex $S$. The part $C$ contains $n$ vertices, i.e., $C=\cup_{1\le i\le n} c_i$ where each $c_i$ corresponds to $v_i$ of instance $G$. The part $D$ contains $m$ vertices, i.e., $D=\cup_{1\le i\le m} d_i$ where each vertex $d_i$ corresponds to edge $e_i$ in instance $G$. The vertex $S$ is the only seed of the graph.
For each edge $e_i=(v_x,v_y)$ in graph $G$, we add two edges: (i) from $c_x$ to $d_i$ and (ii) $c_y$ to $d_i$. Then we add an edge from $S$ to each $c_i (1\le i\le n)$. The propagation probability of each edge is set to $1$.
The construction of $G'$ is completed.

We then show the equivalence between the two instances.
As the propagation probability of each edge is $1$, the expected spread in the graph is equal to the number of vertices that can be reached by seed $S$.
Adding a vertex $v_i$ into vertex set $A$ corresponds to the removal of $c_i$ from the graph $G'$, i.e., $c_i$ is blocked. If edge $e_i$ is in the induced subgraph, the corresponding vertex $d_i$ in $G'$ cannot be reached by seed $S$. We find that blocking the vertices $c_i\in C$ will first lead to the decrease of expected spread of themselves, and the vertices in $D$ may also not be reached by $S$ if both two in-neighbors of them are blocked. Thus, blocking the corresponding vertices $c_i$ of $v_i\in A$ will lead to $|A|+g$ decrease of expected spread, where $g$ is the number of vertices in $D$ that cannot be reached by $S$ (equals to the number of edges in the induced subgraph $G[A]$). Note that there is no need to block vertices in $D$, because they do not have any out-neighbors, and blocking them only leads to the decrease of expected spread of themselves which is not larger than the decrease of the expected spread of blocking the vertices in $C$. 
{We can find that IMIN problem will always block $k$ vertices, as blocking one vertex will lead to at least $1$ decrease of expected spread.}

The optimal blocker set $B$ of IMIN problem corresponds to the optimal vertex set $A$ of DKS problem, where each vertex $c_i\in B$ corresponds to the set $v_i\in A$.
Thus, if the IMIN problem can be solved in PTIME, the DKS problem can also be solved in PTIME, while the DKS problem is NP-hard.
\end{proof}

\begin{figure}
    \centering
    \includegraphics[width=1\linewidth]{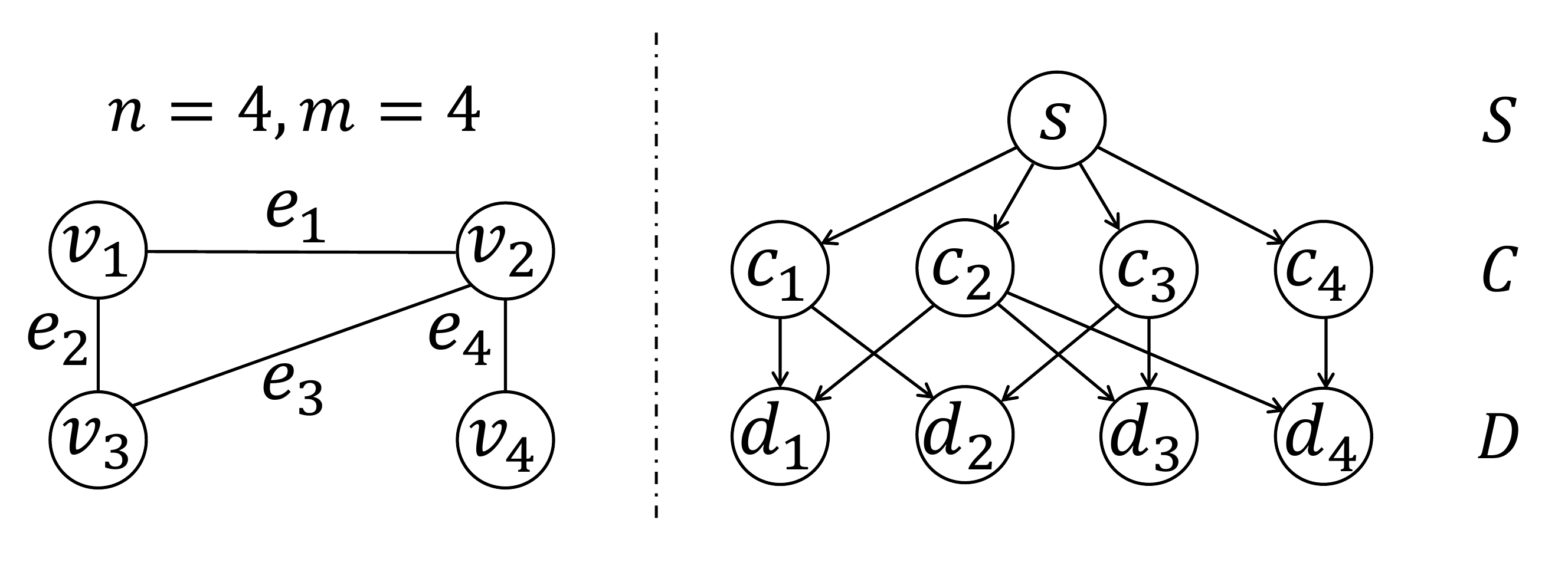}
    \caption{Construction example for hardness proofs, where the left part of the figure is an instance of DKS problem and the right part is its corresponding construction of IMIN problem.}
    \label{fig:np-hard}
\end{figure}

We also show that the function of expected spread is monotone and not supermodular under IC model.

\begin{theorem}
Given a graph $G$ and a seed set $S$, the expected spread function $\mathbb{E}(S,G[V \setminus B])$ is monotone and not supermodular of $B$ under IC model.
\label{theo::expectedspread}
\end{theorem}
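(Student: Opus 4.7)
The plan is to establish the two claims separately: monotonicity via the standard live-edge coupling of the IC model, and the failure of supermodularity via an explicit small counterexample.

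\emph{Monotonicity.} I would invoke the live-edge interpretation of~\cite{first-max}: sample each edge $(u,v)$ independently as ``live'' with probability $p_{u,v}$ to obtain a random subgraph $G_\omega$; under the IC model a vertex is activated iff it is reachable from $S$ in $G_\omega$, so for any induced subgraph $H$ of $G$, $\mathbb{E}(S,H) = \mathbb{E}_\omega[\,|R(S,H_\omega)|\,]$, where $R(S,H_\omega)$ denotes the set of vertices reachable from $S$ in $H_\omega$. Now fix any $B \subseteq B' \subseteq V \setminus S$. For every realization $\omega$, any $S$-to-$u$ live-edge path in $G_\omega[V \setminus B']$ lies entirely inside $V \setminus B' \subseteq V \setminus B$, and hence is also a path in $G_\omega[V \setminus B]$; thus $R(S, G_\omega[V \setminus B']) \subseteq R(S, G_\omega[V \setminus B])$ pointwise in $\omega$. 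Taking expectations yields $\mathbb{E}(S, G[V \setminus B']) \le \mathbb{E}(S, G[V \setminus B])$, establishing that $B \mapsto \mathbb{E}(S, G[V \setminus B])$ is monotone non-increasing.

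\emph{Not supermodular.} I would exhibit a diamond gadget: let $S = \{s\}$ and $V = \{s,a,b,c\}$ with edges $s \to a$, $s \to b$, $a \to c$, $b \to c$, each with propagation probability $1$. Writing $h(B) := \mathbb{E}(S, G[V \setminus B])$, direct enumeration gives $h(\emptyset)=4$, $h(\{a\})=3$, $h(\{b\})=3$, and $h(\{a,b\})=1$. The marginal effect of adding $b$ at the empty base is $h(\{b\})-h(\emptyset) = -1$, while at base $\{a\}$ it is $h(\{a,b\})-h(\{a\}) = -2$. Since $-1 > -2$, the supermodular inequality $h(A \cup \{x\}) - h(A) \le h(B \cup \{x\}) - h(B)$ fails for $A=\emptyset$, $B=\{a\}$, $x=b$, so $h$ is not supermodular.

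\emph{Main obstacle.} The monotonicity step is essentially bookkeeping once the live-edge coupling is in hand. The only subtlety for non-supermodularity is the sign convention: because $h$ is decreasing in $B$, ``not supermodular'' means that the negative marginal can become \emph{more} negative as the base grows, and the diamond is the minimal gadget that forces exactly this complementarity between two blockers lying on parallel in-edges of a common downstream vertex.
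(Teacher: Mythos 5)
Your proof is correct and follows essentially the same strategy as the paper: monotonicity because blocking additional vertices can only shrink the set of vertices reachable (hence activated) from the seed, and non-supermodularity via an explicit counterexample in which two blockers sitting on parallel paths to a common downstream vertex have a super-additive joint effect. The paper instantiates the counterexample on its running example (Figure~\ref{fig:sample}, with $X=\{v_3\}$, $Y=\{v_2,v_3\}$, $x=v_4$) and states monotonicity in a single sentence rather than through the explicit live-edge coupling, but these are presentational differences only; your self-contained diamond gadget and your pointwise reachability argument are both valid.
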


\begin{proof}
As adding any blocker to any set $B$ cannot increase the expected influence spread, we have $\mathbb{E}(S,G[V \setminus B])$ is monotone of $B$.
For every two set $X\subseteq Y\subseteq V$ and vertex $x\in V\setminus Y$, if function $f(\cdot)$ is supermodular, it must hold that $f(X\cup \{x\})-f(X)\le f(Y\cup \{x\})-f(Y)$. 
Consider the graph in Figure~\ref{fig:sample}, let $f(B)=\mathbb{E}(S,G[V \setminus B])$, $X=\{v_3\},Y=\{v_2,v_3\}$ and $x=v_4$. As $f(X)=6.66$, $f(Y)=5.66$, $f(X\cup \{x\})=5.66$, and $f(Y\cup \{x\})=1$, we have $f(X\cup \{x\})-f(X)=-1 > f(Y\cup \{x\})-f(Y)=-4.66$.
\end{proof}

In addition, we prove that the IMIN problem under IC model is hard to approximate.

\begin{theorem}
Under IC model, the IMIN problem is APX-hard unless P=NP.
\label{theo::APX}
\end{theorem}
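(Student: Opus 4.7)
The plan is to promote the construction of Theorem~\ref{theo::NP-IM} into an approximation-preserving (L-) reduction from a problem whose APX-hardness (unless $\mathrm{P}=\mathrm{NP}$) is already established. Reusing the DKS reduction verbatim is tempting but insufficient: a $(1+\varepsilon)$-approximate blocker set yields an induced subgraph whose edge count can differ from $\mathrm{OPT}_{\mathrm{DKS}}$ by $O(\varepsilon(n+m-k))$, and since the additive term $n+m-k$ may dwarf $\mathrm{OPT}_{\mathrm{DKS}}$, the multiplicative gap is lost; worse, DKS is not known to be APX-hard assuming only $\mathrm{P}\ne\mathrm{NP}$. I would therefore choose as source Minimum Vertex Cover on bounded-degree (e.g.\ cubic) graphs, whose APX-hardness unless $\mathrm{P}=\mathrm{NP}$ is classical.

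The reduction keeps the layered template of Theorem~\ref{theo::NP-IM}. From an instance $H$ of Minimum Vertex Cover, the seed $S$ deterministically activates one \emph{selector} $c_u$ for each vertex $u\in V(H)$, and each $c_u$ activates one \emph{edge vertex} $d_e$ for every incident edge $e$; all edge probabilities are $1$. Then for a selector set $B$ of size $b$, blocking reduces the expected spread by exactly $b+|\{e\in E(H):\text{both endpoints of }e\text{ lie in }B\}|$. Requiring that every $d_e$ lose at least one parent forces the unblocked selectors to form an independent set, i.e., the blockers to form a vertex cover of $H$ of size at most $b$. To amplify the gap, I would attach to each $c_u$ a deterministic out-tree of size $\Theta(|V(H)|)$ through a single probability-$1$ in-edge, so that the minimum expected spread becomes an affine function of the minimum vertex-cover number $\tau(H)$ with lower-order additive terms, and a $(1+\varepsilon)$-approximation of IMIN translates into a $(1+O(\varepsilon))$-approximation of $\tau(H)$.

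Two routine steps complete the argument: (i) observe that an optimal blocker uses only selector vertices (edge vertices have no out-neighbours, and a padding-tree vertex has strictly smaller downstream reach than its root selector, so blocking a selector weakly dominates any other choice), which makes the correspondence with a vertex cover well defined; (ii) verify the two L-reduction inequalities linking $\mathbb{E}(S,G'[V\setminus B])$ to the size of the associated vertex cover, and invoke APX-hardness of the source. The main obstacle will be step~(i): one must rule out that ``wasting'' budget on a padding vertex or an edge vertex ever becomes profitable in an approximately optimal solution, which requires choosing the padding size carefully so that every selector strictly dominates its descendants in marginal spread reduction. Once (i) is locked down, the gap inherited from Minimum Vertex Cover yields Theorem~\ref{theo::APX}.
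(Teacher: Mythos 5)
Your diagnosis of the weaknesses in reusing the DKS reduction verbatim is sound and actually sharper than the paper, which does exactly that: it recycles the Theorem~\ref{theo::NP-IM} construction, asserts that a $\gamma$-approximation for IMIN gives a $\lambda$-approximation for DKS without controlling the additive term $n+k$ that can dominate $\mathrm{OPT}_{\mathrm{DKS}}$, and cites Khot's no-PTAS result for DKS as if it held under $\mathrm{P}\neq\mathrm{NP}$ alone. However, your replacement reduction has a fatal flaw at its core. In the layered gadget with all probabilities equal to $1$, the IC model gives OR-semantics: an edge vertex $d_e$ with parents $c_u,c_v$ remains reachable from the seed as long as \emph{at least one} parent is unblocked, so its contribution to the spread drops only when \emph{both} endpoints of $e$ are blocked. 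Your own formula for the spread decrease, $b+|\{e: \text{both endpoints in }B\}|$, says exactly this. Consequently the objective rewards blocker sets that induce many edges of $H$ --- that is dense-subgraph semantics, and it is precisely why the paper's reduction targets DKS. The sentence ``requiring that every $d_e$ lose at least one parent forces the blockers to form a vertex cover'' imposes no constraint whatsoever: removing one of two deterministic parents leaves $d_e$'s activation probability at $1$ and changes nothing in the objective, and IMIN has no mechanism to ``require'' it. To encode vertex cover you would need AND-semantics for reachability (a $d_e$ that dies when \emph{any} parent is blocked), which a single vertex with two probability-$1$ in-edges cannot provide.

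The padding step also points in the wrong direction for an L-reduction. Attaching a deterministic out-tree of size $\Theta(|V(H)|)$ to every selector adds the same amount to every selector's marginal value and inflates the optimal spread to $\Theta(n^2)$ while the combinatorial signal (the number of edge vertices saved) stays at $O(m)=O(n)$ on bounded-degree graphs; a $(1+\varepsilon)$-approximation of a $\Theta(n^2)$ quantity then says nothing about an $O(n)$ term, which violates the first L-reduction condition rather than securing it. A correct route along your general lines would have to either (i) find a gadget realizing AND-reachability so that Minimum Vertex Cover genuinely embeds, or (ii) stay with the dense-subgraph semantics but reduce from a problem with a constant-factor gap that survives the additive offset --- for instance a gap version of Max-Cut or of densest subgraph on degree-bounded instances where $\mathrm{OPT}$ is a constant fraction of $n+m$. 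As written, neither the correspondence with vertex covers in step (i) nor the two L-reduction inequalities in step (ii) can be verified, so the proposal does not yet establish Theorem~\ref{theo::APX}.
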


\begin{proof}
We use the same reduction from the densest k-subgraph (DKS) problem to the Influence Minimization problem, as in the proof of Theorem~\ref{theo::NP-IM}.
Densest k-subgraph (DKS) problem does not have any polynomial-time approximation scheme, unless P=NP~\cite{Khot06}.
According to the proof of Theorem~\ref{theo::NP-IM}, we have a blocker set $B$ for influence minimization problem on $G'$ corresponding to a vertex set $A$ for DKS problem, where each $c_i\in B$ corresponds to $v_i\in A$.
Let $x$ denote the number of edges in the optimal result of DKS, and $y$ denote the optimal spread in IMIN, we have $x+k=y$, where $k$ is the given positive number of DKS problem.
If there is a solution with $\gamma$-approximation on the influence minimization problem, there will be a $\lambda$-approximation on the DKS problem.
Thus, there is no PTAS for the influence minimization problem, and it is APX-hard unless P=NP.
\end{proof}
\section{Existing works and our approach
}
\label{sec:appro}

The hardness of the problem motivates us to develop an effective and efficient heuristic algorithm. 
In this section, we first introduce the state-of-the-art solution (i.e., the greedy algorithm with Monte-Carlo Simulations) as the baseline algorithm (Section~\ref{sec:baseline}). Then, we analyze existing solutions for expected spread computation, and propose a new estimation algorithm based on sampled graphs and dominator trees to compute the decrease of expected spread for all vertices at once (Section~\ref{sec:candidates}). Applying the new framework of expected spread estimation for selecting the candidates, we propose our AdvancedGreedy algorithm (Section~\ref{sec:advanced-greedy}) with higher efficiency and without sacrificing the effectiveness, compared with the baseline.
As the greedy approaches do not consider the cooperation of candidate blockers during the selection, some important vertices may be missed, e.g., some out-neighbors of the seed.
Thus, we further propose a superior heuristic, the GreedyReplace algorithm, to achieve a better result quality (Section~\ref{sec:greedyreplace}).


\label{sec:onesource}
\vspace{1.5mm}
\noindent
\textbf{From Multiple Seeds to One Seed.}
For presentation simplicity, we introduce the techniques for the case of one seed vertex.
A unified seed vertex $s'$ is created to replace all the seeds in the graph. For each vertex $u$, if there are $h$ different seeds pointing to $u$ and the probability on each edge is $p_i (1\le i\le h)$, we remove all the edges from the seeds to $u$ and add an edge from $s'$ to $u$ with probability $(1-\prod_{i=1}^h (1-p_i))$. 
As an active vertex in the IC model only has one chance to activate every out-neighbor, the above modification will not affect the influence spread (i.e., expected spread in the graph) and the resulting blocker set is the same as the original problem.

\subsection{Baseline Algorithm}
\label{sec:baseline}

We first review and discuss the baseline greedy algorithm, which is the state-of-the-art for influence minimization (IMIN) problem and its variants \cite{min-greedy1,min-greedy-tree,min-greedy2,acycle-min-greedy,PhamPHPT19}. 

The greedy algorithm for the IC model is as follows: we start with an empty blocker set $B=\varnothing$, and then iteratively add vertex $u$ into set $B$ that leads to the largest decrease of expected spread, i.e., $u=\arg \max _{u\in V\setminus (S\cup B)}(\mathbb{E}(S,G[V\setminus B])-\mathbb{E}(S,G[V\setminus (B\cup \{u\})])$, until $|B|=b$.

Algorithm~\ref{algo:greedy} shows the details of the baseline greedy algorithm. The algorithm starts from an empty blocker set $B$ (Line 1). Then, in each iteration (Line 2), $x$ records the vertex whose blocking corresponds to the largest decrease of expected spread  (Line 3). The baseline greedy algorithm enumerates all the vertices to find the blocker with the maximum decrease of expected spread in each round (Lines 4-7) and insert it into the blocker set (Line 8). After $b$ iterations, the algorithm returns the blocker set $B$ (Line 9). 

\begin{algorithm}[t]
    \SetVline 
    \SetFuncSty{textsf}
    \SetArgSty{textsf}
	\caption{BaselineGreedy($G,s$)}
	\label{algo:greedy}
	\Input{a graph $G$ and the source $s$}
	\Output{the blocker set $B$}
	\State{$B\leftarrow $ empty}
	\For{$i\leftarrow 1$ to $b$}
	{
	    \State{$x\leftarrow -1$}
	    \For{each vertex in $u\in V(G)\setminus (B\cup \{s\})$}
	    {
	        \State{$\Delta [u]\leftarrow$ decrease of expected spread when blocking $u$}
	        \If{$x=-1$ or $\Delta [u] > \Delta[x]$}
	        {
	            \State{$x\leftarrow u$}
	        }
	    }
	    \State{$B\leftarrow B\cup \{x\}$}
	}
	\Return{$B$}
\end{algorithm}

As the previous works use Monte-Carlo Simulations to compute the expected spread for the greedy algorithm (Line 5 in Algorithm~\ref{algo:greedy}), each computation of spread decrease needs $O(r\cdot m)$ time, where $r$ is the number of rounds in Monte-Carlo Simulations. Thus, the time complexity of Algorithm~\ref{algo:greedy} is $O(b\cdot n\cdot r\cdot m)$.

As indicated by the complexity, the baseline greedy algorithm cannot efficiently handle the cases with large $b$. 
The greedy heuristic is usually effective on small $b$ values, while the time cost is still large because it has to enumerate the whole vertex set as the candidate blockers and compute the expected spread for each candidate. 


\subsection{Efficient Algorithm for Candidate Selection}
\label{sec:candidates}

In this subsection, we propose an efficient algorithm for selecting the candidates.
We first show that the existing solutions for computing expected spread are infeasible to solve the IMIN problem efficiently (Section~\ref{sec:ec-exist}). Then, we propose a new framework (Section~\ref{sec:new-es}) based on sampled graphs (Section~\ref{sec::sample-graph}) and their dominator trees (Section~\ref{sec::domi-tree}) which can quickly compute the decrease of the expected spread of every candidate blocker through only one scan on the dominator trees.

\vspace{2.5mm}
\subsubsection{Existing Works}
\label{sec:ec-exist}

As computing the expected influence spread of a seed set in IC model is \#P-hard  \cite{maximization1}, and the exact solution can only be used in small graphs (e.g., with a few hundred edges) \cite{MaeharaSI17}. Thus, the existing works focus on estimation algorithms.
There are two directions as follows.

\vspace{2mm}
\noindent \textbf{Monte-Carlo Simulations (MCS).}
Kempe et al. \cite{first-max} apply Monte-Carlo Simulations to estimate the influence spread under IC model, which is often used in some influence related problems, e.g., \cite{OhsakaAYK14,MS-PCG,influencepath}. In each round of MCS, it removes every edge $(u,v)$ with $(1-p_{u,v})$ probability. Let $G'$ be the resulting graph, and the set $R(s)$ contains the vertices in $G'$ that are reachable from $s$ (i.e., there exists at least one path from $s$ to each vertex in $R(s)$). For the original graph $G$ and seed $s$, the expected size of set $R(s)$ equals to the expected spread $\mathbb{E}(\{s\},G)$ \cite{first-max}. 
Assuming we take $r$ rounds of MCS to estimate the expected spread, MCS needs $O(r\cdot m)$ times to calculate the expected spread. 
Recall that the influence minimization problem is to find the optimal blocker set with a given seed set.
The spread computation by MCS for influence minimization is costly, because the dynamic of influence spread caused by different blockers is not fully utilized in the sampling, and we have to repeatedly conduct MCS for each candidate blocker set.

\vspace{2mm}
\noindent \textbf{Reverse Influence Sampling (RIS).}
Borgs et al. \cite{Borgs-max} propose the Reverse Influence Sampling to approximately estimate the influence spread,  which is often used in the solutions for Influence Maximization (IMAX) problem, e.g., \cite{SunHYC18,Guo0WC20}. For each round, RIS generates an instance of $g$ randomly sampled from graph $G$ by removing each edge $(u,v)$ in $G$ with $(1-p_{u,v})$ probability, and then randomly sample a vertex $x$ in $g$. It performs reverse BFS to compute the reverse reachable (RR) set of the vertex $x$, i.e., the vertices which can be reached by vertex $x$ in the reverse graph of $g$. They prove that if the RR set of vertex $v$ has $\rho$ probability to contain the vertex $u$, when $u$ is the seed vertex, we have $\rho$ probability to activate $v$. 
In the IMAX problem, RIS generates RR sets by sampling the vertices in the sampled graphs and then applying the greedy heuristic. 
As the expected influence spread is submodular of seed set $S$ \cite{expectedspread}, an approximation ratio can be guaranteed by RIS in IMAX problem.
However, for our problem, reversing the graph is not helpful as the blockers seem ``intermediary" between the seeds and other vertices s.t. the computation cannot be unified into a single process in the reversing.
We prove the expected spread is not supermodular of blocker set $B$ which implies the absolute value of the marginal gain does not show a diminishing return.
Thus, the decrease of expected spread led by a blocker combination cannot be determined by the union effect of single blockers in the combination.



\begin{figure*}[t]
\centering
\begin{subfigure}[h]{0.205\linewidth}
    \includegraphics[width=1\columnwidth]{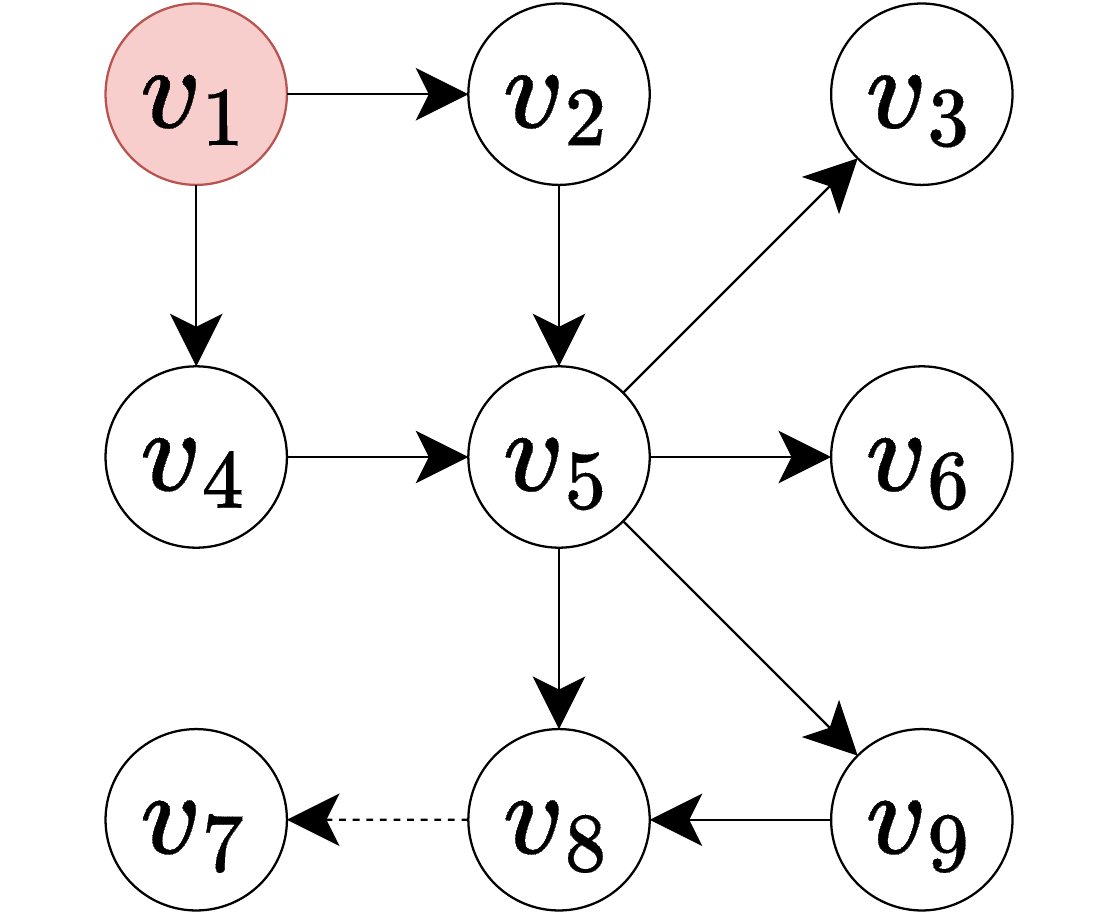}
    \caption{Sampled graph 1}\label{fig:sample-example-a}
\end{subfigure}
\begin{subfigure}[h]{0.205\linewidth}
    \includegraphics[width=1\columnwidth]{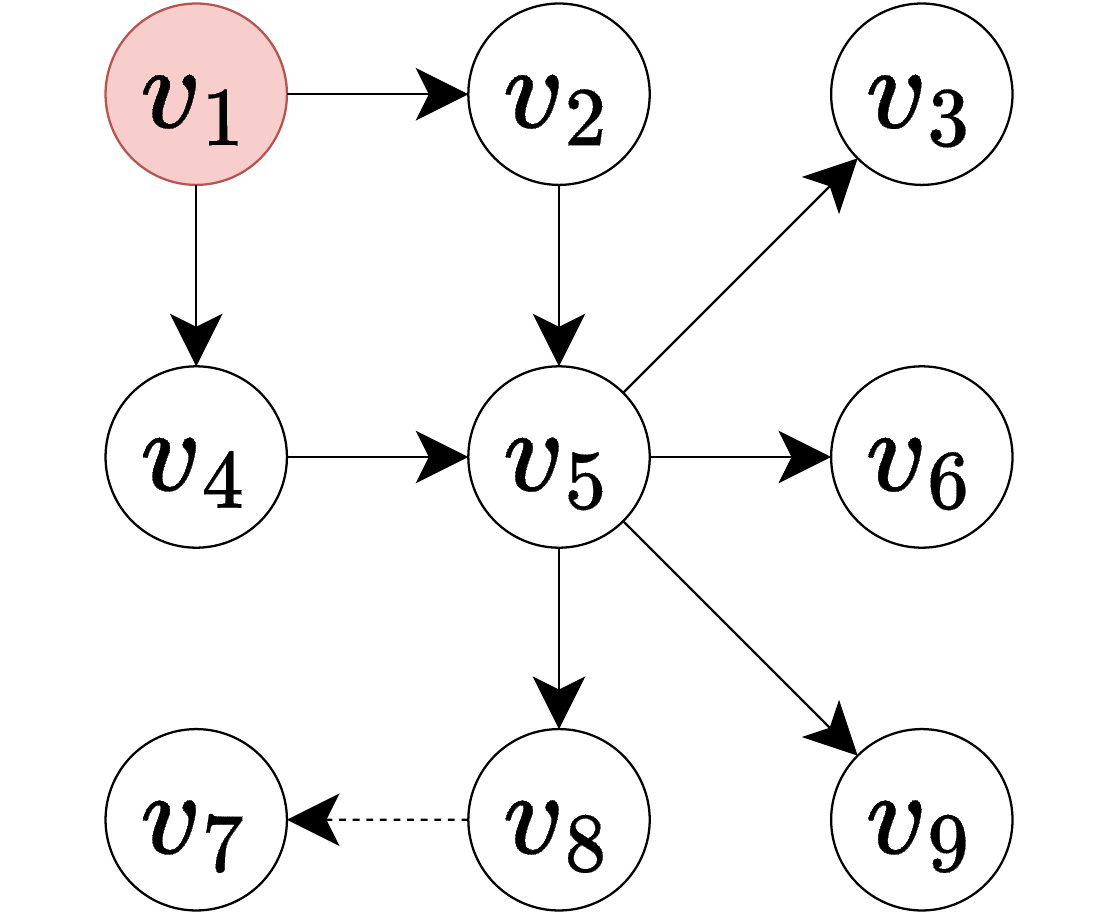}
    \caption{Sampled graph 2}\label{fig:sample-example-b}
\end{subfigure}
\begin{subfigure}[h]{0.205\linewidth}
    \includegraphics[width=1\columnwidth]{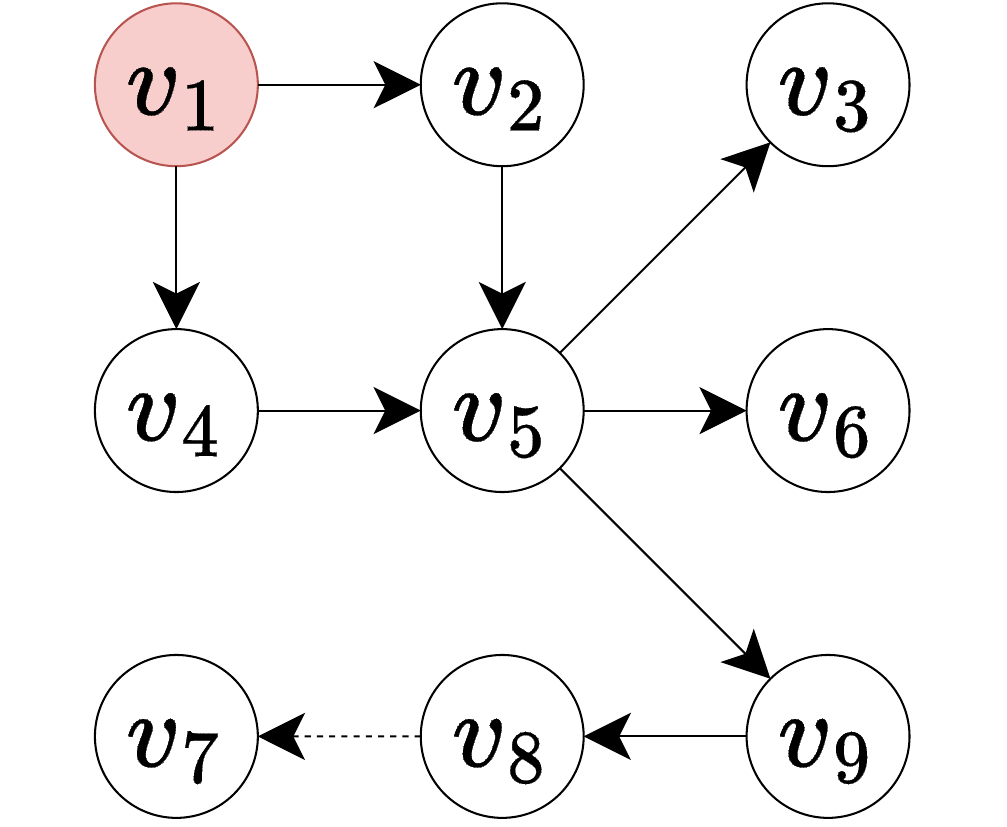}
    \caption{Sampled graph 3}\label{fig:sample-example-c}
\end{subfigure}
\begin{subfigure}[h]{0.205\linewidth}
    \includegraphics[width=1\columnwidth]{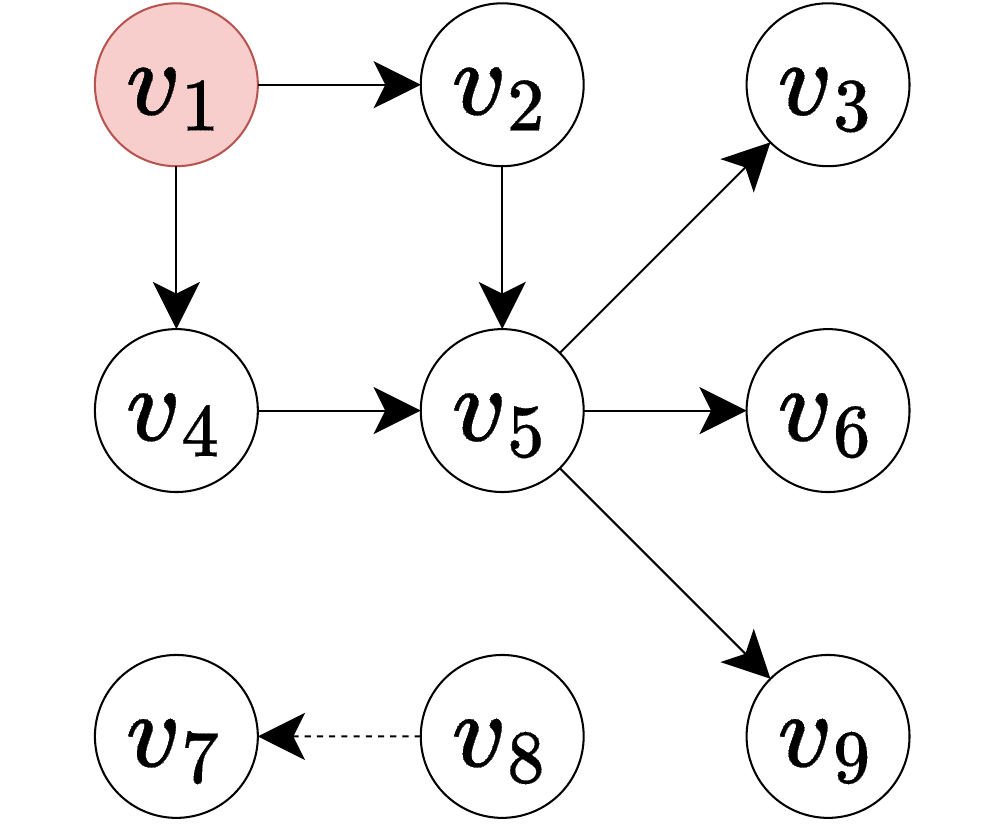}
    \caption{Sampled Graph 4}\label{fig:sample-example-d}
\end{subfigure}
\caption{Sampled graphs of the graph $G$ in Figure~\ref{fig:sample}.}
\label{fig:sample-example}
\end{figure*}

\begin{figure*}[t]
\centering
\begin{subfigure}[h]{0.205\linewidth}
    \includegraphics[width=1\columnwidth]{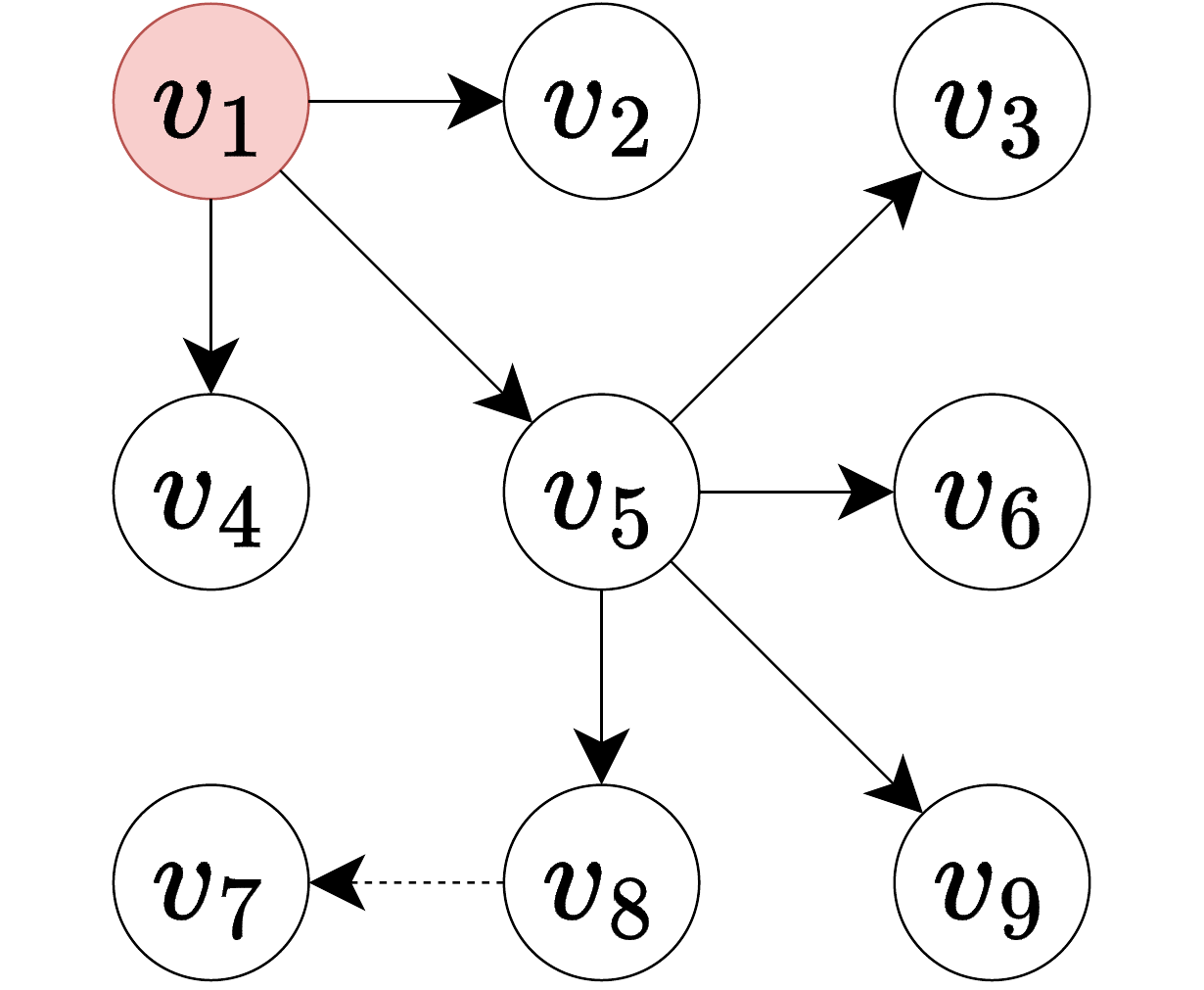}
    \caption{DT of Sampled Graph 1}\label{fig:sample-dt-example-a}
\end{subfigure}
\begin{subfigure}[h]{0.205\linewidth}
    \includegraphics[width=1\columnwidth]{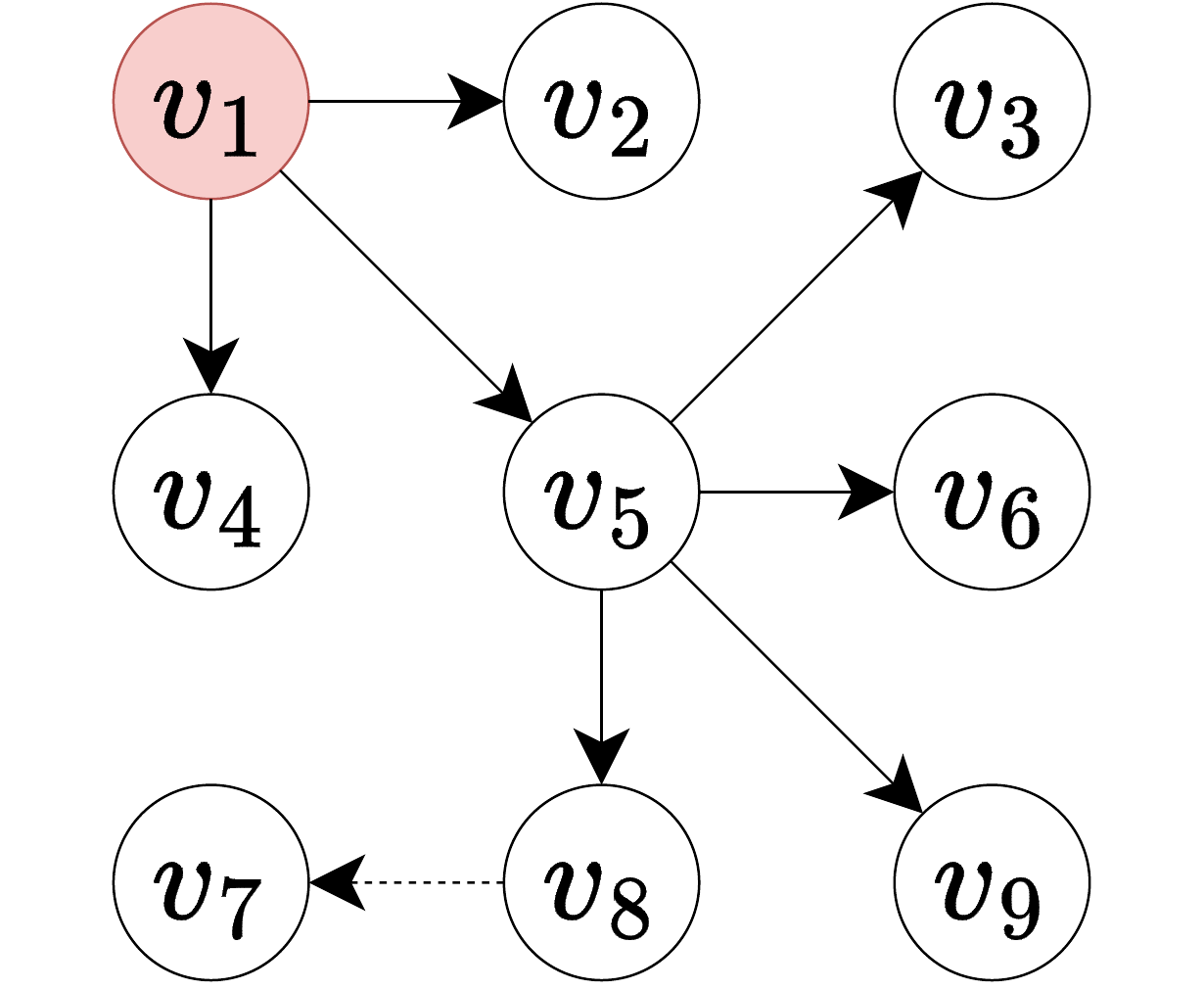}
    \caption{DT of Sampled Graph 2}\label{fig:sample-dt-example-b}
\end{subfigure}
\begin{subfigure}[h]{0.205\linewidth}
    \includegraphics[width=1\columnwidth]{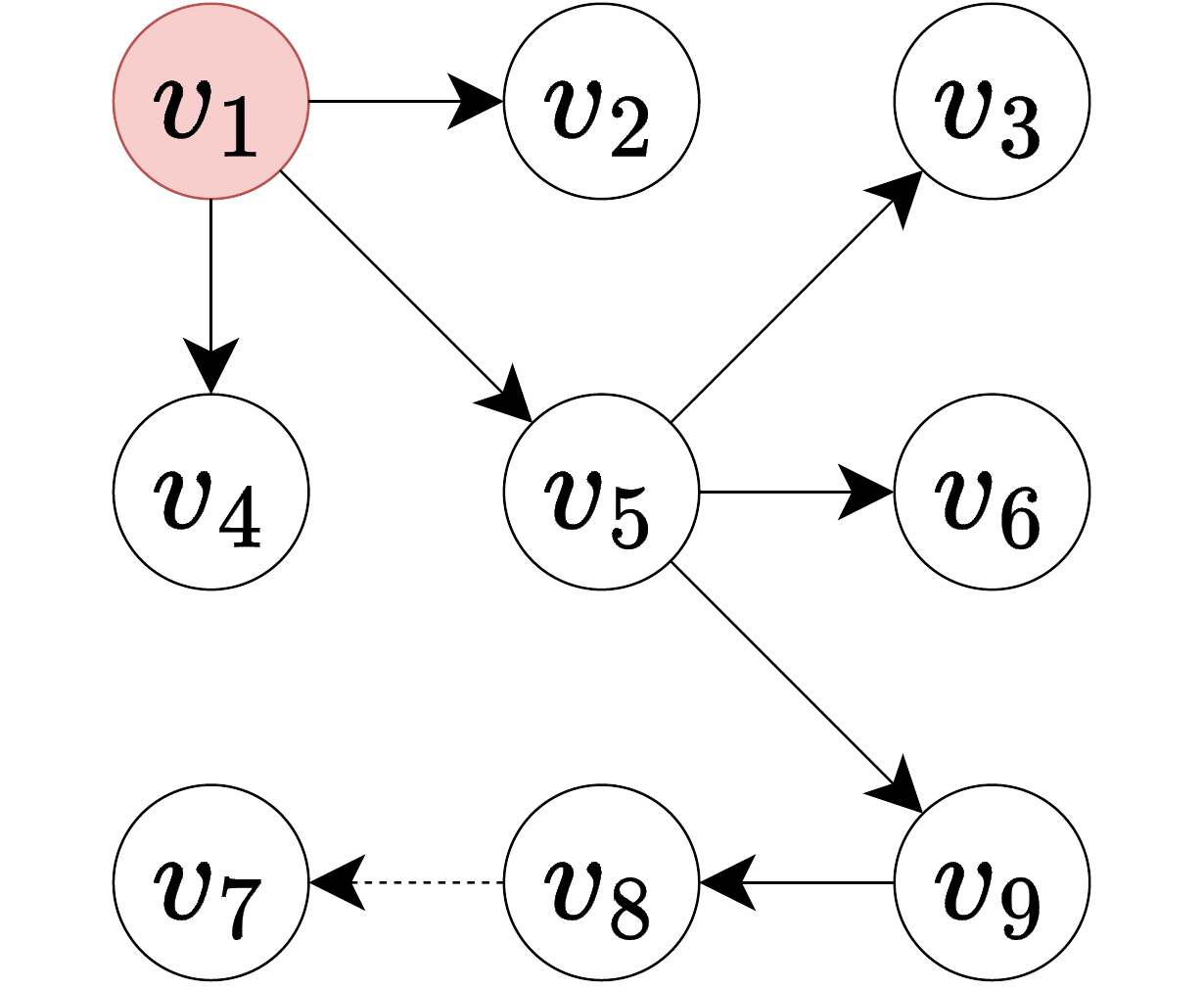}
    \caption{DT of Sampled Graph 3}\label{fig:sample-dt-example-c}
\end{subfigure}
\begin{subfigure}[h]{0.205\linewidth}
    \includegraphics[width=1\columnwidth]{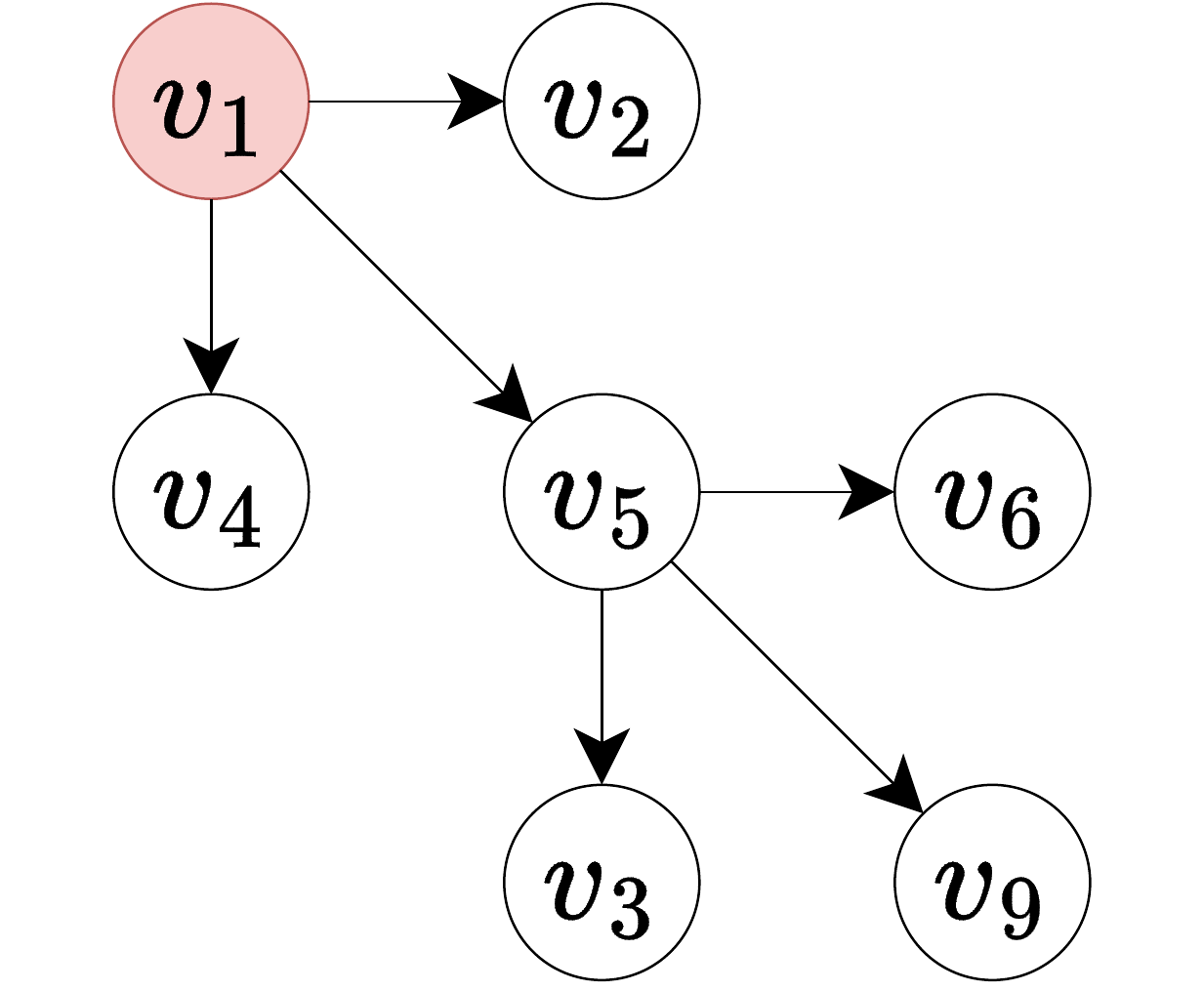}
    \caption{DT of Sampled Graph 4}\label{fig:sample-dt-example-d}
\end{subfigure}
\caption{Dominator trees of sampled graphs in Figure~\ref{fig:sample-example}.}
\label{fig:sample-dt-example}
\end{figure*}

\vspace{2mm}
\subsubsection{Estimate the Expected Spread Decrease with Sampled Graphs} \label{sec::sample-graph}
We first define the random sampled graph.

\begin{definition}[Random Sampled Graph]
Let $\mathcal{G}$ be the distribution of the graphs with each induced by the randomness in edge removals from $G$, i.e., removing each edge $e=(u,v)$ in $G$ with $1-p_{u,v}$ probability. A random sampled graph $g$ derived from $G$ is an instance randomly sampled from $\mathcal{G}$.
\end{definition}

\begin{table}
    \caption{\revise{Summary of Notations for Random Sampled Graph}}
    \small
    \label{tab:nota-sampled}
       \centering
       \resizebox{\linewidth}{!}{
       \begin{tabular}{|p{0.15\columnwidth}|p{0.75\columnwidth}|}\hline
       \rowcolor[gray]{0.95} \revise{Notation} & \revise{Definition}\\ \hline\hline
       \revise{$\sigma(s,G)$} &
       \revise{the number of vertices reachable from $s$ in $G$}\\ \hline
       \revise{$\sigma^{\rightarrow u}(s,G)$} &
\revise{the number of vertices reachable from $s$ in $G$, where all the paths from $s$ to these vertices pass through $u$}
       \\ \hline
       \revise{$\xi^{\rightarrow u}(s,G)$} & \revise{the average number of $\sigma^{\rightarrow u}(s,G)$ in the sampled graphs which are derived from $G$}\\ \hline
       \end{tabular}}
\end{table}

\revise{We summarize the notations related to the random sampled graph in Table~\ref{tab:nota-sampled}.}
The following lemma is a useful interpretation of expected spread with IC model~\cite{MaeharaSI17}.

\begin{lemma}
\label{lemma:sample}
Suppose that the graph $g$ is a random sampled graph derived from $G$. Let $s$ be a seed vertex, we have $\mathbb{E}[\sigma(s,g)]=\mathbb{E}(\{s\},G)$.
\end{lemma}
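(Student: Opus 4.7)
The plan is to establish the lemma via the classical ``live-edge'' coupling between the IC process and edge sampling, originally used by Kempe, Kleinberg and Tardos in the analysis of influence maximization. The key identity to exploit is
\[
\mathbb{E}(\{s\},G) \;=\; \sum_{u \in V(G)} \mathcal{P}^G(u,\{s\}),
\]
so it suffices to show, for each vertex $u$, that $\Pr[u \text{ is reachable from } s \text{ in } g] = \mathcal{P}^G(u,\{s\})$; linearity of expectation then yields $\mathbb{E}[\sigma(s,g)] = \sum_u \Pr[u \in R_g(s)] = \mathbb{E}(\{s\},G)$.

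To prove the per-vertex equality, I would couple the random draw of $g$ with the IC diffusion using the principle of deferred decisions. In the IC model, when a newly active vertex $u$ attempts to activate an inactive out-neighbor $v$, it does so with probability $p_{u,v}$, independently across edges, and each edge $(u,v)$ is used at most once. Instead of flipping these coins dynamically, one can flip them all in advance: declare each edge $(u,v)$ ``live'' with probability $p_{u,v}$ and ``dead'' otherwise, independently. Because each edge's fate is consulted at most once and the coins are independent, the joint distribution of the set of ``successful'' activation attempts is identical in the two formulations. Consequently, the set of vertices eventually activated by $s$ in the IC process has exactly the same distribution as the set of vertices reachable from $s$ via live edges.

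Next, I would observe that the random sampled graph $g$, obtained from $G$ by independently deleting each edge $(u,v)$ with probability $1-p_{u,v}$, realises precisely this live-edge configuration: the surviving edges are exactly the live ones. Therefore the set of vertices reachable from $s$ in $g$ has the same distribution as the set of vertices activated in the IC cascade starting from $s$, which gives $\Pr[u \in R_g(s)] = \mathcal{P}^G(u,\{s\})$ for every $u$. Summing over $u$ and invoking linearity of expectation finishes the proof.

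The only subtle step is the second paragraph: rigorously justifying that deferring the edge coin flips does not alter the distribution of activated vertices. This is standard but deserves an explicit sentence noting that in IC each edge participates in at most one activation attempt and that the attempts are mutually independent, so pre-committing to the outcome of each coin is distributionally equivalent. Once that coupling is in place, the remainder of the argument is a direct application of linearity of expectation and the definition of $\mathbb{E}(\{s\},G)$.
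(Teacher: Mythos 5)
Your proposal is correct. The paper itself offers no proof of this lemma: it is stated as a known ``interpretation of expected spread'' and attributed to prior work (the live-edge equivalence of Kempe, Kleinberg and Tardos, as used in the cited exact-computation paper). Your deferred-decisions coupling, together with linearity of expectation over the per-vertex reachability probabilities, is precisely the standard argument underlying that citation, and the one subtlety you flag --- that each edge's coin is consulted at most once in the IC process, so pre-committing the flips preserves the joint distribution of activations --- is indeed the only point requiring care.
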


\revise{By Lemma~\ref{lemma:sample}, we have the following corollary for computing the expected spread when blocking one vertex.}

\begin{corollary}
\label{corollary::es-onevertex}
\revise{Given two fixed vertices $s$ and $u$ with $s\neq u$, and a random sampled graph $g$ derived from $G$, we have $\mathbb{E}[\sigma(s,g)-\sigma^{\rightarrow u}(s,g)]=\mathbb{E}(\{s\},G[V\setminus \{u\}])$.}
\end{corollary}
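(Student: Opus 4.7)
The plan is to reduce the claim to Lemma~\ref{lemma:sample} applied on the graph $G[V\setminus\{u\}]$ via a pointwise identity that holds for every realization of the random sampled graph $g$.

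First I would establish the deterministic identity
\[
\sigma(s,g) - \sigma^{\rightarrow u}(s,g) \;=\; \sigma\bigl(s, g[V\setminus\{u\}]\bigr)
\]
for every instance $g$ of the sampled graph. The left-hand side counts vertices $v$ that are reachable from $s$ in $g$ but whose reachability does \emph{not} force every $s$-to-$v$ path through $u$. Since all paths from $s$ to $u$ itself trivially end at $u$, the vertex $u$ is always absorbed into the $\sigma^{\rightarrow u}$ term whenever it is reachable, and is therefore excluded from the difference. For any other vertex $v\neq u$, being counted in the difference is equivalent to the existence of at least one $s$-to-$v$ path in $g$ avoiding $u$; by definition of the induced subgraph, this is exactly the condition ``$v$ is reachable from $s$ in $g[V\setminus\{u\}]$''. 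This pins down the identity.

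Next I would match the two sampling distributions. The random sampled graph $g$ on $G$ keeps each edge $(x,y)$ of $G$ independently with probability $p_{x,y}$. The induced subgraph $g[V\setminus\{u\}]$ therefore keeps each edge of $G[V\setminus\{u\}]$ (i.e., each edge of $G$ not incident to $u$) independently with the same probability $p_{x,y}$, and is independent of the status of the edges incident to $u$. Hence $g[V\setminus\{u\}]$ is distributed exactly as a random sampled graph $g'$ derived from $G[V\setminus\{u\}]$.

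Combining the two, and taking expectation over the randomness in $g$,
\[
\mathbb{E}\bigl[\sigma(s,g) - \sigma^{\rightarrow u}(s,g)\bigr] \;=\; \mathbb{E}\bigl[\sigma(s, g[V\setminus\{u\}])\bigr] \;=\; \mathbb{E}\bigl[\sigma(s, g')\bigr],
\]
and applying Lemma~\ref{lemma:sample} to the graph $G[V\setminus\{u\}]$ (with the single seed $s$, which lies in $V\setminus\{u\}$ since $s\neq u$) yields $\mathbb{E}[\sigma(s,g')] = \mathbb{E}(\{s\}, G[V\setminus\{u\}])$, which is the desired equality.

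The main obstacle I expect is the careful handling of the bookkeeping around the vertex $u$ in the first step: one has to verify that the case $v=u$ is correctly cancelled out by the $\sigma^{\rightarrow u}$ definition (since every path to $u$ ``passes through'' $u$), and that the conversion ``some path avoids $u$'' $\Longleftrightarrow$ ``reachable in the induced subgraph'' really uses nothing more than the fact that deleting $u$ deletes exactly those paths that visit $u$. Once that deterministic identity is nailed down, the rest is a one-line distributional matching plus Lemma~\ref{lemma:sample}.
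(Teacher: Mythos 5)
Your proposal is correct and follows essentially the same route as the paper: the pointwise identity $\sigma(s,g)-\sigma^{\rightarrow u}(s,g)=\sigma(s,g[V\setminus\{u\}])$, the observation that $g[V\setminus\{u\}]$ is distributed as a random sampled graph derived from $G[V\setminus\{u\}]$, and an application of Lemma~\ref{lemma:sample}. The only difference is that you spell out the bookkeeping around the vertex $u$ and the distributional matching more explicitly than the paper does, which only makes the argument more careful.
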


\begin{proof}
\revise{Let $g'=g[V(g)\setminus \{u\}]$, we have $\sigma(s,g)-\sigma^{\rightarrow u}(s,g)=\sigma(s,g')$. Thus, $\mathbb{E}[\sigma(s,g)-\sigma^{\rightarrow u}(s,g)]=\mathbb{E}[\sigma(s,g')]$. As $g'=g[V(g)\setminus \{u\}]$, $g'$ can be regarded as a random sampled graph derived from graph $G[V\setminus \{u\}]$. By Lemma~\ref{lemma:sample}, $\mathbb{E}[\sigma(s,g')]=\mathbb{E}(\{s\},G[V\setminus \{u\}])$. Therefore, $\mathbb{E}[\sigma(s,g)-\sigma^{\rightarrow u}(s,g)]=\mathbb{E}(\{s\},G[V\setminus \{u\}])$.}
\end{proof}

\revise{Based on Lemma~\ref{lemma:sample} and Corollary~\ref{corollary::es-onevertex}, we can compute the decrease of expected spread when a vertex is blocked.}

\begin{theorem}
Let $s$ be a fixed vertex, $u$ be a blocked vertex, and $g$ be a random sampled graph derived from $G$, respectively. For any vertex $u\in V(G)\setminus \{s\}$, we have the decrease of expected spread by blocking $u$ is equal to $\mathbb{E}[\sigma^{\rightarrow u}(s,g)]$, where $\mathbb{E}[\sigma^{\rightarrow u}(s,g)]=\mathbb{E}(\{s\},G)-\mathbb{E}(\{s\},G[V\setminus \{u\}])$.
\label{theo::block_expected}
\end{theorem}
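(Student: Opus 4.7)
The plan is to derive the identity in one short computation, since the heavy lifting has already been done in Lemma~\ref{lemma:sample} and Corollary~\ref{corollary::es-onevertex}. The statement has two assertions packed into it: (i) the decrease in expected spread $\mathbb{E}(\{s\},G)-\mathbb{E}(\{s\},G[V\setminus \{u\}])$ can be written as the single expectation $\mathbb{E}[\sigma^{\rightarrow u}(s,g)]$ over the random sampled graph $g$, and (ii) this quantity is indeed ``the decrease of expected spread by blocking $u$''. Part (ii) is just the definition of what it means to block $u$ (every incoming edge to $u$ has probability $0$, so $u$ and everything reachable only through $u$ drops out), combined with the fact that $\mathbb{E}(\{s\}, G[V\setminus\{u\}])$ equals the expected spread in the blocked graph. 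So the real task is to prove (i).

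For (i), I would proceed in the following order. First, I would invoke Lemma~\ref{lemma:sample} to obtain $\mathbb{E}[\sigma(s,g)] = \mathbb{E}(\{s\},G)$. Second, I would invoke Corollary~\ref{corollary::es-onevertex} to obtain $\mathbb{E}[\sigma(s,g) - \sigma^{\rightarrow u}(s,g)] = \mathbb{E}(\{s\},G[V\setminus\{u\}])$. Third, I would apply linearity of expectation to the second identity, rewriting it as $\mathbb{E}[\sigma(s,g)] - \mathbb{E}[\sigma^{\rightarrow u}(s,g)] = \mathbb{E}(\{s\},G[V\setminus\{u\}])$. Substituting the first identity and rearranging yields
\begin{equation*}
\mathbb{E}[\sigma^{\rightarrow u}(s,g)] \;=\; \mathbb{E}(\{s\},G) - \mathbb{E}(\{s\},G[V\setminus\{u\}]),
\end{equation*}
which is exactly the claimed equality.

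The only step that needs any care is checking that linearity of expectation applies cleanly to $\sigma(s,g) - \sigma^{\rightarrow u}(s,g)$, i.e., that both $\sigma(s,g)$ and $\sigma^{\rightarrow u}(s,g)$ are well-defined nonnegative integer-valued random variables bounded by $|V(G)|$ on every realization of $g$. This is immediate from their definitions (they count reachable vertices in a finite graph), so the interchange of expectation with subtraction is harmless. I do not anticipate any genuine obstacle here; the theorem is essentially a restatement of Corollary~\ref{corollary::es-onevertex} that isolates the term $\mathbb{E}[\sigma^{\rightarrow u}(s,g)]$, which is the quantity the later algorithmic sections will actually estimate via dominator trees over sampled graphs.
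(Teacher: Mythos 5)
Your proposal is correct and follows essentially the same route as the paper's own proof: both invoke Lemma~\ref{lemma:sample} for $\mathbb{E}[\sigma(s,g)]=\mathbb{E}(\{s\},G)$, invoke Corollary~\ref{corollary::es-onevertex} for $\mathbb{E}[\sigma(s,g)-\sigma^{\rightarrow u}(s,g)]=\mathbb{E}(\{s\},G[V\setminus\{u\}])$, and finish by linearity of expectation and rearrangement. Your extra remark that the random variables are bounded nonnegative integers is a harmless addition the paper leaves implicit.
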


\begin{proof}
\revise{
As $\sigma^{\rightarrow u}(s,g)=\sigma(s,g)-(\sigma(s,g)-\sigma^{\rightarrow u}(s,g))$, we have $\mathbb{E}[\sigma^{\rightarrow u}(s,g)]=\mathbb{E}[\sigma(s,g)]-\mathbb{E}[\sigma(s,g)-\sigma^{\rightarrow u}(s,g)]$. 
From Corollary~\ref{corollary::es-onevertex}, we have $\mathbb{E}[\sigma(s,g)-\sigma^{\rightarrow u}(s,g)]=\mathbb{E}(\{s\},G[V\setminus \{u\}])$.
By Lemma~\ref{lemma:sample}, we know $\mathbb{E}[\sigma(s,g)]=\mathbb{E}(\{s\},G)$.
Therefore, $\mathbb{E}[\sigma^{\rightarrow u}(s,g)]=\mathbb{E}(\{s\},G)-\mathbb{E}(\{s\},G[V\setminus \{u\}])$.}
\end{proof}

As we use random sampling for estimating the decrease of expected spread of each vertex, we show that the average number of $\sigma^{\rightarrow u}(s,g)$ is an accurate estimator of any vertex $u$ and fixed seed vertex $s$, when the number of sampled graphs is sufficiently large. Let $\theta$ be the number of sampled graphs, $\xi^{\rightarrow u}(s,G)$ be the average number of $\sigma^{\rightarrow u}(s,g)$ and $OPT$ be the exact decrease of expected spread from blocking vertex $u$, i.e., $OPT=\mathbb{E}[\sigma^{\rightarrow u}(s,g)]$ (Theorem~\ref{theo::block_expected}). We use the Chernoff bounds~\cite{MotwaniR95} for theoretical analysis.

\begin{lemma}
Let $X$ be the sum of $c$ i.i.d. random variables sampled from a distribution on $[0,1]$ with a mean $\mu$. For any $\delta >0$, we have $Pr[X-c\mu \ge \delta \cdot c\mu] \le e^{-\delta^2c\mu /(2+\delta)}$ and $Pr[X-c\mu \le -\delta \cdot c\mu] \le e^{-\delta^2c\mu /2}.$
\label{lemma:cher}
\end{lemma}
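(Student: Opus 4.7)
The plan is to prove this via the standard moment-generating-function (Chernoff) technique, lifted from $\{0,1\}$-valued variables to $[0,1]$-valued variables using convexity. Let $X_1,\ldots,X_c$ be the i.i.d.\ summands with $X_i\in[0,1]$ and $\mathbb{E}[X_i]=\mu$, and let $X=\sum_i X_i$, so $\mathbb{E}[X]=c\mu$.

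First I would handle the upper tail $\Pr[X \ge (1+\delta)c\mu]$. The starting point is Markov's inequality applied to $e^{tX}$ for an arbitrary $t>0$:
\[
\Pr[X \ge (1+\delta)c\mu] \;=\; \Pr[e^{tX}\ge e^{t(1+\delta)c\mu}] \;\le\; e^{-t(1+\delta)c\mu}\,\mathbb{E}[e^{tX}].
\]
By independence, $\mathbb{E}[e^{tX}]=\prod_i \mathbb{E}[e^{tX_i}]$. The key step that accommodates $X_i\in[0,1]$ (rather than merely $\{0,1\}$) is the convexity bound $e^{tx}\le 1+(e^t-1)x$ valid for $x\in[0,1]$ and $t\ge 0$; taking expectations gives $\mathbb{E}[e^{tX_i}]\le 1+(e^t-1)\mu \le \exp((e^t-1)\mu)$, so $\mathbb{E}[e^{tX}]\le \exp((e^t-1)c\mu)$. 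Substituting back and optimizing by choosing $t=\ln(1+\delta)$ yields the classical form
\[
\Pr[X \ge (1+\delta)c\mu] \;\le\; \left(\frac{e^{\delta}}{(1+\delta)^{1+\delta}}\right)^{c\mu}.
\]
To convert this to the cleaner $e^{-\delta^2 c\mu/(2+\delta)}$ form stated in the lemma, I would prove the scalar inequality $\delta-(1+\delta)\ln(1+\delta) \le -\delta^2/(2+\delta)$ for $\delta>0$, which reduces to showing a single-variable function $h(\delta)$ is nonpositive by checking $h(0)=0$ and differentiating once. This algebraic step is the main technical obstacle.

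For the lower tail $\Pr[X\le (1-\delta)c\mu]$, I would run the same scheme with $t<0$: by Markov, $\Pr[-X\ge -(1-\delta)c\mu]\le e^{t(1-\delta)c\mu}\mathbb{E}[e^{-tX}]$ for $t>0$ where I rename signs. The same convexity bound gives $\mathbb{E}[e^{-tX_i}]\le \exp((e^{-t}-1)\mu)$, hence
\[
\Pr[X\le (1-\delta)c\mu] \;\le\; \left(\frac{e^{-\delta}}{(1-\delta)^{1-\delta}}\right)^{c\mu}
\]
after choosing $t=-\ln(1-\delta)$ (requiring $\delta<1$; for $\delta\ge 1$ the lower-tail event is vacuous since $X\ge 0$). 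I would then finish with the standard inequality $-\delta-(1-\delta)\ln(1-\delta)\le -\delta^2/2$, again provable by reducing to a one-variable calculus check (Taylor-expanding $\ln(1-\delta)=-\delta-\delta^2/2-\delta^3/3-\cdots$ and bounding the remaining tail term by term).

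Overall the proof is a routine application of the exponential-moment method; the only nontrivial pieces are (i) the convexity lemma that transfers the argument from Bernoulli to $[0,1]$-valued variables, and (ii) the two scalar inequalities that massage $(e^{\pm\delta}/(1\pm\delta)^{1\pm\delta})^{c\mu}$ into the stated $e^{-\delta^2 c\mu/(2+\delta)}$ and $e^{-\delta^2 c\mu/2}$ forms. I would present (i) explicitly and treat (ii) as a short calculus exercise, since this is exactly the statement one finds in Motwani--Raghavan, cited in the lemma.
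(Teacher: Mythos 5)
The paper gives no proof of this lemma at all---it simply cites Motwani--Raghavan for the Chernoff bounds---and your outline is exactly the standard exponential-moment argument one finds there (Markov on $e^{tX}$, the chord/convexity bound $e^{tx}\le 1+(e^t-1)x$ to handle $[0,1]$-valued rather than Bernoulli summands, optimization of $t$, and the two scalar inequalities converting $\bigl(e^{\pm\delta}/(1\pm\delta)^{1\pm\delta}\bigr)^{c\mu}$ into $e^{-\delta^2 c\mu/(2+\delta)}$ and $e^{-\delta^2 c\mu/2}$). The proposal is correct, including the observation that the lower tail is vacuous for $\delta\ge 1$, so it faithfully supplies the proof the paper delegates to its citation.
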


\begin{theorem}
\label{theorem:approx}
For seed vertex $s$ and a fixed vertex $u$, the inequality $|\xi^{\rightarrow u}(s,G)-OPT| < \varepsilon\cdot OPT$ holds with at least $1-{n^{-l}}$ probability when $\theta\ge \frac{l(2+\varepsilon)n\log n}{\varepsilon^2\cdot OPT}$.
\end{theorem}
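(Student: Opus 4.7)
The plan is to set up the estimator as a sum of i.i.d.\ bounded random variables, apply the two-sided Chernoff bounds from Lemma~\ref{lemma:cher}, and then substitute the prescribed sample size. Concretely, let $g_1,\dots,g_\theta$ be the $\theta$ independent random sampled graphs derived from $G$, and define $Y_i=\sigma^{\rightarrow u}(s,g_i)/n$. Because $\sigma^{\rightarrow u}(s,g_i)$ counts a subset of vertices, $Y_i\in[0,1]$, and the $Y_i$'s are i.i.d.\ by independence of the samples. By Theorem~\ref{theo::block_expected}, $\mathbb{E}[\sigma^{\rightarrow u}(s,g_i)]=OPT$, so $\mu:=\mathbb{E}[Y_i]=OPT/n$. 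Since $\xi^{\rightarrow u}(s,G)=\frac{1}{\theta}\sum_{i=1}^{\theta}\sigma^{\rightarrow u}(s,g_i)$, the deviation event $|\xi^{\rightarrow u}(s,G)-OPT|\ge \varepsilon\cdot OPT$ is equivalent to $\bigl|\sum_i Y_i-\theta\mu\bigr|\ge \varepsilon\cdot\theta\mu$, which is exactly the form Lemma~\ref{lemma:cher} handles.

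Next, I would apply Lemma~\ref{lemma:cher} with $X=\sum_{i=1}^{\theta}Y_i$, $c=\theta$, and $\delta=\varepsilon$. The upper tail is bounded by $\exp\bigl(-\varepsilon^2\theta\mu/(2+\varepsilon)\bigr)$ and the lower tail by $\exp(-\varepsilon^2\theta\mu/2)$; since $2+\varepsilon\ge 2$, a union bound gives a total failure probability at most $2\exp\bigl(-\varepsilon^2\theta\mu/(2+\varepsilon)\bigr)$. Substituting $\mu=OPT/n$ and the assumed lower bound $\theta\ge \frac{l(2+\varepsilon)n\log n}{\varepsilon^2\cdot OPT}$, the exponent becomes at least $l\log n$, which forces each tail to be at most $n^{-l}$. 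Rearranging then yields the claimed $1-n^{-l}$ success probability (up to absorbing the union-bound factor of $2$ into constants, as discussed below).

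The main obstacle is cosmetic rather than conceptual: Lemma~\ref{lemma:cher} is stated separately for the two tails, so a naive union bound produces the constant $2$ in front, giving $2n^{-l}$ instead of $n^{-l}$. I would handle this by observing that the lower-tail exponent $\varepsilon^2\theta\mu/2$ is strictly larger than the upper-tail exponent $\varepsilon^2\theta\mu/(2+\varepsilon)$, so the lower-tail contribution is at most $n^{-l(2+\varepsilon)/2}$, which is dominated by the upper tail and can be absorbed by a mild adjustment of constants in $l$; alternatively, one can simply restate the bound with this slack, as is standard in the influence-sampling literature. A minor auxiliary point worth recording in the write-up is that $\sigma^{\rightarrow u}(s,g_i)\le n$ (so the normalization $Y_i\in[0,1]$ is valid) and that independence of the $Y_i$ follows directly from the independence of the sampled graphs $g_i$ from the distribution $\mathcal{G}$.
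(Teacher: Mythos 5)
Your proposal is correct and follows essentially the same route as the paper's proof: normalize $\sigma^{\rightarrow u}(s,g_i)$ by $n$ to obtain i.i.d.\ variables in $[0,1]$ with mean $OPT/n$, apply Lemma~\ref{lemma:cher} with $\delta=\varepsilon$ and $X=\xi^{\rightarrow u}(s,G)\cdot\theta/n$, and substitute the assumed lower bound on $\theta$. You are in fact slightly more careful than the paper, which silently drops the factor of $2$ arising from the two-tail union bound that you explicitly address and absorb.
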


\begin{proof}
We have
$Pr[|\xi^{\rightarrow u}(s,G)-OPT|\ge \varepsilon\cdot OPT]$
$
\begin{matrix}
 =& Pr[|\frac{\xi^{\rightarrow u}(s,G)\cdot \theta}{n}-\frac{OPT\cdot \theta}{n}|\ge \frac{\varepsilon \cdot \theta}{n}\cdot OPT] \\
\end{matrix}
$.

Let $\delta=\varepsilon$, $\mu = \frac{OPT}{n}$ and $X=\frac{\xi^{\rightarrow u}(s,G)\cdot \theta}{n}$.
According to Lemma~\ref{lemma:cher}, we have
$Pr[|\xi^{\rightarrow u}(s,G)-OPT|\ge \varepsilon\cdot OPT]$
$=  Pr[|X-\theta \mu|\ge \delta \cdot \theta \mu] 
\le exp(\frac{-\delta^2\theta\mu}{2+\delta})
 = exp(\frac{-\varepsilon^2\theta OPT}{n(2+\varepsilon)})$.


As $\theta\ge \frac{l(2+\varepsilon)n\log n}{\varepsilon^2\cdot OPT}$, we have $Pr[|\xi^{\rightarrow u}(s,G)-OPT|\ge \varepsilon\cdot OPT]\le exp(l \log n) = n^{-l}$.
Therefore, $|\xi^{\rightarrow u}(s,G)-OPT| < \varepsilon\cdot OPT$ holds with at least $1-n^{-l}$ probability.
\end{proof}

\subsubsection{Dominator Trees of Sampled Graphs}
\label{sec::domi-tree}


In order to efficiently compute the decrease of expected spread when each vertex is blocked, we apply Lengauer-Tarjan algorithm to construct the dominator tree~\cite{LengauerT79}. 
Note that, in the following of this subsection, the id of each vertex is reassigned by the sequence of a DFS on the graph starting from the seed.


\begin{definition}[dominator]
Given $G=(V,E)$ and a source $s$, the vertex $u$ is a dominator of vertex $v$ when every path in $G$ from $s$ to $v$ has to go through $u$.
\end{definition}

\begin{definition}[immediate dominator]
Given $G=(V,E)$ and a source $s$, the vertex $u$ is the immediate dominator of vertex $v$, denoted $idom(v)=u$, if $u$ dominates $v$ and every other dominator of $v$ dominates $u$.
\end{definition}

We can find that every vertex except source $s$ has a unique immediate dominator. The dominator tree of graph $G$ is induced by the edge set $\{(idom(u),u)\mid u\in V\setminus \{s\}\}$ with root $s$ \cite{AhoU73,LowryM69}.

Lengauer-Tarjan algorithm proposes an efficient algorithm for constructing the dominator tree.
It first computes the semidominator of each vertex $u$, denoted by $sdom(u)$, where $sdom(u)=\min \{v\mid \text{there is a path~} v=v_0,v_1,\cdots, v_k=u \text{~with~} v_i>u \text{~for any integer~} i\in [1, k)\}$. The semidominator can be computed by finding the minimum $sdom$ value on the paths of the DFS.
The core idea of Lengauer-Tarjan algorithm is to fast compute the immediate dominators by the semidominators based on the following lemma. The details of the algorithm can be found in~\cite{LengauerT79}.

\begin{lemma}~\cite{LengauerT79}
Given $G=(V,E)$ and a source $s$, let $u\neq s$ and $v$ be the vertex with the minimum $sdom(v)$ among the vertices in the paths from $sdom(u)$ to $u$ (including $u$ but excluding $sdom(u)$), then we have $$idom(u)=\left\{\begin{matrix}
sdom(u) & \text{if } sdom(u)=sdom(v), \\
idom(v) & \text{otherwise}. 
\end{matrix}\right.$$
\label{lemma:sdom2idom}
\vspace{-3mm}
\end{lemma}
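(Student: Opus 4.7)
The plan is to follow the classical Lengauer--Tarjan argument, which relies on two structural facts once vertices are numbered by DFS preorder: for every $u\neq s$, $sdom(u)$ is a proper DFS--ancestor of $u$; and $idom(u)$ is always a (not necessarily proper) DFS--ancestor of $sdom(u)$. Consequently, $idom(u)$ lies on the tree path from the root down to $sdom(u)$, and the candidate $v$ in the statement lies on the tree path from $sdom(u)$ down to $u$. Given these, the remaining task is to pin down where on this tree $idom(u)$ actually sits, which is exactly the content of the two cases.

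The main tool I would establish first is a ``semidominator--path'' lemma: for any $x\neq s$ there is a path in $G$ from $sdom(x)$ to $x$ whose strict interior consists only of vertices with DFS numbers strictly greater than $x$. Applied to $v$ and then extended along the DFS tree down from $v$ to $u$, this yields a path from $sdom(v)$ through $v$ to $u$ whose interior vertices all have DFS numbers at least $v$, and in particular are never ancestors of $v$ in the DFS tree. This concatenated path is the workhorse for both cases.

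For Case~1 ($sdom(u)=sdom(v)$), I would show that $sdom(u)$ dominates $u$ by contradiction. Given an $s\to u$ path $P$ avoiding $sdom(u)$, I would locate the first edge of $P$ that enters the DFS subtree rooted at $sdom(u)$, say $(x,y)$ with $x$ outside and $y$ inside. Tracing the suffix from $y$ to $u$ while tracking the tree path from $sdom(u)$ to $u$ produces a vertex $w$ on that tree path together with a witness that $sdom(w)\le x<sdom(u)$, contradicting the minimality of $sdom(v)=sdom(u)$ along the tree path. Since $idom(u)$ is always an ancestor of $sdom(u)$, this forces $idom(u)=sdom(u)$.

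For Case~2 ($sdom(u)\neq sdom(v)$, so $sdom(v)<sdom(u)$ by the choice of $v$), I need two facts: (a) $idom(v)$ dominates $u$, and (b) no proper DFS--descendant of $idom(v)$ dominates $u$. Part (b) is the easier direction: any candidate $y$ strictly below $idom(v)$ on the tree can be bypassed by an $s\to v$ path avoiding $y$ (which exists because $idom(v)$ is the \emph{immediate} dominator of $v$), extended along the tree down from $v$ to $u$ without revisiting $y$. For part (a), I would argue by contradiction: given an $s\to u$ path $R$ avoiding $idom(v)$, I would splice $R$ with a suitable DFS--tree segment and with the semidominator--path from $sdom(v)$ to $v$ to manufacture an $s\to v$ path also avoiding $idom(v)$, contradicting the dominance of $idom(v)$ over $v$. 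The splice relies on the fact that the interior vertices of the semidominator--path have DFS numbers strictly greater than $v$, so none of them can coincide with the ancestor $idom(v)$. Carrying this out precisely, while also handling the boundary case $sdom(v)=idom(v)$, is the main obstacle; it is the step where the DFS ancestor discipline, the semidominator--path tool, and the minimality of $sdom(v)$ along the tree path all have to be combined carefully.
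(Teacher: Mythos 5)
First, note that the paper does not prove this lemma at all: it is imported verbatim from Lengauer and Tarjan \cite{LengauerT79} ("The details of the algorithm can be found in \cite{LengauerT79}"), so there is no in-paper argument to compare against. Your sketch is recognizably the original Lengauer--Tarjan route -- DFS preorder numbering, the facts that $sdom(u)$ is a proper tree ancestor of $u$ and $idom(u)$ is a tree ancestor of $sdom(u)$, the semidominator-path lemma, and path-splicing contradictions against the minimality of $sdom(v)$ on the tree path -- which is the right and essentially the only natural skeleton. You are also correct that the "paths from $sdom(u)$ to $u$" in the statement must be read as the DFS-tree path.

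That said, two of your concrete steps would fail as literally described. In Case~1, you take $x$ to be the tail of the \emph{first} edge of $P$ entering the DFS subtree rooted at $sdom(u)$ and then assert $sdom(w)\le x<sdom(u)$; but a vertex outside that subtree need not have a smaller preorder number than $sdom(u)$ (it may lie entirely to the right of the subtree's preorder interval), so $x<sdom(u)$ is not guaranteed. The standard fix is to take $x$ to be the \emph{last} vertex on $P$ with DFS number less than $sdom(u)$ and $y$ the first subsequent vertex of $P$ on the tree path from $sdom(u)$ to $u$; you then need the path lemma (any path from a lower-numbered to a higher-numbered vertex passes through a common DFS ancestor), which your sketch never invokes, to conclude that the interior of the $x$-to-$y$ segment exceeds $y$ and hence $sdom(y)\le x$. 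In Case~2(a), the proposed splice runs in the wrong direction: from an $s\to u$ path $R$ avoiding $idom(v)$ you cannot in general manufacture an $s\to v$ path, because $v$ is a tree \emph{ancestor} of $u$ and paths do not reverse; appending the semidominator path of $v$ requires first reaching $sdom(v)$ from $R$, which is exactly what is unavailable. The standard argument instead applies the same last-$x$/first-$y$ device to $R$ directly and derives either $sdom(y)<sdom(v)$ for a $y$ on the tree path from $sdom(u)$ to $u$ (contradicting minimality) or an explicit $s\to u$ path avoiding the candidate dominator built as tree path to $sdom(y)$, then the semidominator path of $y$, then the tree path from $y$ to $u$. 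Your part (b) and the overall case decomposition are fine, but without these repairs the two central dominance claims are not established.
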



The time complexity of Lengauer-Tarjan algorithm is $O(m\cdot \alpha(m,n))$ which is almost linear, where $\alpha$ is the inverse function of Ackerman's function~\cite{ReelleZahlen}. 

\vspace{2mm}
\subsubsection{Compute the Expected Spread Decrease of Each Vertex}
\label{sec:new-es}

Following the above subsections, if a vertex $u$ is blocked, we can use the number of vertices in the subtree rooted at $u$ in the dominator tree to estimate the decrease of expected spread. 
Thus, using a depth-first search of each dominator tree, we can accumulate the decrease of expected spread for every vertex if it is blocked.

\begin{algorithm}[t]
    \SetVline 
    \SetFuncSty{textsf}
    \SetArgSty{textsf}
	\caption{DecreaseESComputation($G,s,\theta$)}
	\label{algo:blockes}
	\Input{a graph $G$, the source $s$ and the number of sampled graphs $\theta$}
	\Output{$\Delta [u]$ for each $u\in V(G)\setminus \{s\}$, which is the decrease of expected spread when $u$ is blocked}
	\State{$\Delta[\cdot]\leftarrow 0$}
	\For{$i\leftarrow 1$ to $\theta$}
	{
	    \State{Generate a sampled graph $g$ derived from $G$}
    	\State{$DT\leftarrow$ the dominator tree of $g$ which is constructed by Lengauer-Tarjan algorithm~\cite{LengauerT79}}
    	\State{$c[\cdot]\leftarrow$ the size of subtree in tree $DT$ when each vertex is the root}
    	\For{each $u\in V(g)\setminus \{s\}$}
    	{
    	    $\Delta[u]\leftarrow \Delta[u] + c[u]/\theta$
    	}
	}

	\Return{$\Delta[\cdot]$}
\end{algorithm}

\begin{theorem}
\revise{Let $s$ be a fixed vertex in graph $g$. For any vertex $u\in (V(g)\setminus \{s\})$, we have $\sigma^{\rightarrow u}(s,g)$ equals to the size of the subtree rooted at $u$ in the dominator tree of graph $g$.}
\label{theo:dominator-subtree}
\end{theorem}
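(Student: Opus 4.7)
The plan is to establish the stated equality by identifying the subtree rooted at $u$ in the dominator tree with the set of reachable vertices whose every $s$-to-$v$ path is forced to cross $u$, i.e., the vertices dominated by $u$. Concretely, I will show that the mapping $v \mapsto v$ is a bijection between the subtree rooted at $u$ and the set counted by $\sigma^{\rightarrow u}(s,g)$, after which the theorem follows by counting both sides.

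First I would record two preliminary facts that follow directly from the definitions given earlier in the paper. (i) The vertex set of the dominator tree of $g$ coincides with the set of vertices reachable from $s$ in $g$, because $idom(v)$ is only defined when $v$ is reachable. (ii) The ancestor relation in the dominator tree is exactly the domination relation: $w$ is an ancestor of $v$ (including the case $w=v$) if and only if $w$ dominates $v$. The ``only if'' direction is immediate by induction on the path from $v$ up the tree, using that each $idom(v)$ dominates $v$; the ``if'' direction uses the standard uniqueness property that the dominators of $v$ form a totally ordered chain under the domination order, which the immediate-dominator chain is forced to enumerate.

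With these facts in place, the core argument is a chain of equivalences. For any $v \in V(g)$:
\begin{equation*}
v \text{ lies in the subtree rooted at } u \iff u \text{ is an ancestor of } v \text{ in the dominator tree}
\end{equation*}
by definition of subtree; the right-hand side is equivalent to ``$u$ dominates $v$'' by fact (ii); and this is in turn equivalent to ``$v$ is reachable from $s$ in $g$ and every path from $s$ to $v$ in $g$ passes through $u$'' by the definition of a dominator (together with fact (i), which supplies the reachability half). The last condition is exactly the membership condition for the set whose size is $\sigma^{\rightarrow u}(s,g)$. Counting the equivalent sets yields the claimed equality.

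The main obstacle, though a minor one, is pinning down fact (ii) rigorously, in particular ensuring that every dominator of $v$ actually appears on the immediate-dominator chain from $v$ up to $s$. I would handle this by observing that if $w$ dominates $v$ but is not on this chain, then walking up from $v$ one reaches some ancestor $w'$ whose immediate dominator skips past $w$, contradicting the fact that among the proper dominators of $w'$ the one closest to $w'$ is $idom(w')$; since $w$ is a proper dominator of $w'$ that lies strictly between $w'$ and $idom(w')$ along every $s$-to-$w'$ path, this is impossible. Once this lemma is settled, the rest of the proof is essentially bookkeeping and an invocation of the definitions.
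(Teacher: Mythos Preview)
Your proof is correct and follows essentially the same approach as the paper: both argue that a vertex $v$ lies in the subtree rooted at $u$ in the dominator tree if and only if $u$ dominates $v$, which is exactly the membership condition counted by $\sigma^{\rightarrow u}(s,g)$. The paper's proof is briefer, asserting the equivalence ``$v$ not in the subtree $\iff$ $u$ does not dominate $v$'' without justification, whereas you carefully isolate and prove this as fact~(ii); your treatment is therefore more rigorous but not a different route.
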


\begin{proof}
\revise{Let $c[u]$ denote the size of the subtree with root $u$ in the dominator tree of graph $g$. Assume $v$ is in the subtree with $u$ (i.e., $u$ is the ancestor of $v$), from the definition of dominator, we have $v$ cannot be reached by $s$ when $u$ is blocked. Thus, $c[u]\le \sigma^{\rightarrow u}(s,g)$. If $v$ is not in the subtree, i.e., $u$ does not dominate $v$ in the graph, there is a path from $s$ to $v$ not through $u$, which means blocking $u$ will not affect the reachability from $s$ to $v$. We have $\sigma^{\rightarrow u}(s,g) =  c[u]$.}
\end{proof}

Thus, for each blocker $u$, we can estimate the decrease of the expected spread by the average size of the subtrees rooted at $u$ in the dominator trees of the sampled graphs.

\begin{example}
Considering the graph $G$ in Figure~\ref{fig:sample}, there are only three edges with propagation probabilities less than $1$ (i.e., $(v_5,v_8)$, $(v_9,v_8)$ and $(v_8,v_7)$), and the other edges will exist in any sampled graph. 
Figures~\ref{fig:sample-example-a}-\ref{fig:sample-example-d} depict all the possible sampled graphs.
For conciseness, we use the dotted edge $(v_8,v_7)$ to represent whether it may exist in a sampled graph or not (corresponding to two different sampled graphs, respectively).
When $(v_8,v_7)$ is not in the sampled graphs, as $p_{v_5,v_8}=0.5$ and $p_{v_9,v_8}=0.2$, Figures~\ref{fig:sample-example-a}, \ref{fig:sample-example-b}, \ref{fig:sample-example-c} and \ref{fig:sample-example-d} have $0.1, 0.4, 0.1$ and $0.4$ to exist, respectively. 
As $p_{v_8,v_7}=0.1$, $v_1$ can reach $8+0.1=8.1$ vertices in expectation in Figure~\ref{fig:sample-example-a}. Similarly, $v_1$ can reach $8.1,8.1$ and $7$ vertices (including $v_1$) in expectation in Figures~\ref{fig:sample-example-b}, \ref{fig:sample-example-c} and  \ref{fig:sample-example-d}, respectively.
Thus, the expected spread of graph $G$ is $8.1\times (0.1+0.4+0.1)+7\times 0.4=7.66$, which is the same as the result we compute in Example~\ref{example:compute}.

Figures~\ref{fig:sample-dt-example-a}-\ref{fig:sample-dt-example-d} show the corresponding dominator trees of the sampled graphs in Figure~\ref{fig:sample-example}. For vertex $v_5$, the expected sizes of the subtrees rooted at $v_5$ are $5.1,5.1,5.1$ and $4$ in the dominator trees, respectively. Thus, the blocking of $v_5$ will lead to $5.1\times(0.1+0.4+0.1)+4\times 0.4=4.66$ decrease of expected spread. As the sizes of subtrees of $v_2,v_3,v_4$ and $v_6$ are only $1$ in each dominator tree, blocking any of them will lead to $1$ expected spread decrease. 
Similarly, blocking $v_7$, $v_8$ and $v_9$ will lead to $0.66$, $0.06$ and $1.11$ expected spread decrease, respectively.
\end{example}

Algorithm~\ref{algo:blockes} shows the details for computing the decrease of expected spread of each vertex. We set $\Delta[\cdot]$ as $0$ initially (Line 1). Then we generate $\theta$ different sampled graphs derived from $G$ (Lines 2-3). For each sampled graph, we first construct the dominator tree through Lengauer-Tarjan (Line 4). Then we use a simple DFS to compute the size of each subtree. After computing the average size of the subtrees and recording it in $\Delta[\cdot]$ (Lines 6-7), we return $\Delta[\cdot]$ (Line 8).

As computing the sizes of the subtrees through DFS costs $O(m)$, Algorithm~\ref{algo:blockes} runs in $O(\theta\cdot m\cdot \alpha(m,n))$.

\subsection{Our AdvancedGreedy Algorithm}
\label{sec:advanced-greedy}

\begin{algorithm}[t]
    \SetVline 
    \SetFuncSty{textsf}
    \SetArgSty{textsf}
	\caption{AdvancedGreedy($G,s,b,\theta$)}
	\label{algo:fast-greedy}
	\Input{a graph $G$, the source $s$, budget $b$ and the number of sampled graphs $\theta$}
	\Output{the blocker set $B$}
    \State{$B\leftarrow$ empty}
	\For{$i\leftarrow 1$ to $b$}
	{
	    \State{$\Delta[\cdot] \leftarrow$ DecreaseESComputation($G[V\setminus B],s,\theta$)}
	    \State{$x\leftarrow  -1$}
	    \For{each $u\in V(G)\setminus (\{S\}\cup B)$}
	    {
	        \If{$x=-1$ or $\Delta[u] > \Delta[x]$}
	        {
	            \State{$x\leftarrow u$}
	        }
	    }
	    \State{$B\leftarrow B\cup \{x\}$}
	}
	\Return{$B$}
\end{algorithm}

Based on Section~\ref{sec:baseline} and Section~\ref{sec:candidates}, we propose AdvancedGreedy algorithm with high efficiency.
In the greedy algorithm, we aim to greedily find the vertex $u$ that leads to the largest decrease of expected spread. Algorithm~\ref{algo:blockes} can efficiently compute the expected spread decrease of every candidate blocker.
Thus, we can directly choose the vertex which can cause the maximum decrease of expected spread (i.e., the maximum average size of the subtrees in the dominator trees derived from sampled graphs) as the blocker.

Algorithm~\ref{algo:fast-greedy} presents the pseudo-code of our AdvancedGreedy algorithm. We start with the empty blocker set (Line 1). In each of the $b$ iterations (Line 2), we first estimate the decrease of expected spread of each vertex (Line 3), find the vertex $x$ such that $\Delta[x]$ is the largest as the blocker (Lines 2-7) and insert it to blocker set (Line 8). Finally, the algorithm returns the blocker set $B$ (Line 9).

\vspace{1mm}
\noindent \textbf{Comparison with Baseline.}
one round of MCS on $G$ will generate a graph $G'$ where $V(G')=V(G)$ and each edge in $E(G)$ will appear in $G'$ if the simulation picks this edge.
Thus, if we have $r=\theta$, our computation based on sampled graphs will not sacrifice the effectiveness, compared with MCS.
For efficiency, Algorithm~\ref{algo:fast-greedy} runs in $O(b\cdot \theta \cdot m\cdot \alpha(m,n) )$ and the time complexity of the saseline is $O(b\cdot r\cdot m\cdot n)$ (Algorithm~\ref{algo:greedy}). As $\alpha(m,n)$ is much smaller than $n$, our AdvancedGreedy algorithm has a lower time complexity without sacrificing the effectiveness, compared with the baseline algorithm.

\subsection{The GreedyReplace Algorithm}
\label{sec:greedyreplace}

\revise{
Some out-neighbors of the seed may be an essential part of the result while they may be missed by current greedy heuristics. Thus, we propose a new heuristic (GreedyReplace) which is to first select $b$ out-neighbors of the seed as the initial blockers, and then greedily replace a blocker with another vertex if the expected spread will decrease.}

\vspace{1mm}
\begin{example}
Considering the graph in Figure~\ref{fig:sample} with the seed $v_1$, Table~\ref{tab:motiv} shows the result of the Greedy algorithm and the result of only considering the out-neighbors as the candidate blockers (denoted as OutNeighbors). When $b=1$, Greedy chooses $v_5$ as the blocker because it leads to the largest expected spread decrease ($v_3,v_6,v_7,v_8$ and $v_9$ will not be influenced by $v_1$).
When $b = 2$, it further blocks $v_2$ or $v_4$ in the second round.
OutNeighbors only considers blocking $v_2$ and $v_4$. It blocks either of them when $b=1$, and blocks both of them when $b=2$.
\end{example}
\vspace{1mm}

\begin{table}[t]
\small
	\centering%
	\caption{Blockers and Their Expected Influence Spread}\label{tab:motiv}
	\resizebox{\linewidth}{!}{
	\begin{tabular}{c|c|c|c|c}
	\toprule
	\multirow{2}{*}{\centering Algorithm }& \multicolumn{2}{|c|}{$b=1$} & \multicolumn{2}{|c}{$b=2$}\\ \cline{2-5}
	 & $B$ & $\mathbb{E}(\cdot)$ & $B$ & $\mathbb{E}(\cdot)$ \\
	\midrule
	
	Greedy & $\{v_5\}$ & $3$ & $\{v_5,v_2$ or $v_4\}$ & $2$ \\ \hline
	OutNeighbors & $\{v_2$ or $v_4\}$ & $6.66$ &$\{v_2,v_4\}$ & $1$ \\ \hline
	GreedyReplace & $\{v_5\}$ & $3$ &$\{v_2,v_4\}$ & $1$ \\

	\bottomrule
	\end{tabular}
	}
\end{table}



In this example, we find that the performance of the Greedy algorithm is better than the OutNeighbors when $b$ is small, but its expected spread may become larger than OutNeighbors with the increase of $b$. 
As the budget $b$ can be either small or large in different applications, it is essential to further improve the heuristic algorithm.


\begin{algorithm}[t]
    \SetVline 
    \SetFuncSty{textsf}
    \SetArgSty{textsf}
	\caption{GreedyReplace($G,s,b,\theta$)}
	\label{algo:greedyreplace}
	\Input{a graph $G$, the source $s$, budget $b$ and the number of sampled graphs $\theta$}
	\Output{the blocker set $B$}
	\State{$CB\leftarrow N^{out}_s$}
    \State{$B\leftarrow$ empty}
	\For{$i\leftarrow 1$ to $\min\{d^{out}_s,b\}$}
	{
	    \State{$\Delta[\cdot] \leftarrow$ DecreaseESComputation($G[V\setminus B],s,\theta$)}
	    \State{$x\leftarrow  -1$}
	    \For{each $u\in CB$}
	    {
	        \If{$x=-1$ or $\Delta[u] > \Delta[x]$}
	        {
	            \State{$x\leftarrow u$}
	        }
	    }
	    \State{$CB\leftarrow CB\setminus \{x\}$}
	    \State{$B\leftarrow B\cup \{x\}$}
	}
	
	\For{each $u\in B$ with the reversing order of insertion}
	{
	    \State{$B\leftarrow B\setminus \{u\}$}
	    \State{$\Delta[\cdot] \leftarrow$ DecreaseESComputation($G[V\setminus B],s,\theta$)}
	    \State{$x\leftarrow  -1$}
	    \For{each $u\in V(G)\setminus B$}
	    {
	        \If{$x=-1$ or $\Delta[u] > \Delta[x]$}
	        {
	            \State{$x\leftarrow u$}
	        }
	    }
	    \State{$B\leftarrow B\cup \{x\}$}
	    \If{$u=x$}
	    {
	        \State{Break}
	    }
	}
	\Return{$B$}
\end{algorithm}


Due to the above motivation, based on Greedy and OutNeighbors, we propose the GreedyReplace algorithm to address their defects and combine the advantages.
We first greedily choose $b$ out-neighbors of the seed as the initial blockers. 
Then, we replace the blockers according to the reverse order of the out-neighbors' blocking order. 
As we can use Algorithm~\ref{algo:blockes} to compute the decrease of the expected spread of blocking any other vertex, in each round of replacement, we set all the vertices in $V(G)\setminus (\{s\}\cup B)$ as the candidates for replacement.
We will early terminate the replace procedure when the vertex to replace is current best blocker.

The expected spread of GreedyReplace is certainly not larger than the algorithm which only blocks the out-neighbors. Through the trade-off between choosing the out-neighbors and the replacement, the cooperation of the blockers is considered in GreedyReplace. 

\begin{example}
\label{example:move}
Considering the graph in Figure~\ref{fig:sample} with the seed $v_1$, Table~\ref{tab:motiv} shows the results of three algorithms.
When $b=1$, GreedyReplace first consider the out-neighbors as the candidate blockers and set $v_2$ or $v_4$ as the blocker. As blocking $v_5$ can achieve smaller influence spread than both $v_2$ and $v_4$, it will replace the blocker with $v_5$. When $b=2$, GreedyReplace first block $v_2$ and $v_4$, and there is no better vertex to replace. The expected spread is $1$. GreedyReplace achieves the best performance for either $b=1$ or $b=2$.
\end{example}

Algorithm~\ref{algo:greedyreplace} shows the pseudo-code of GreedyReplace. We first push all out-neighbors of the seed into candidate blocker set $CB$ (Line 1) and set blocker set empty initially (Line 2). For each round (Line 3), we choose the candidate blocker which leads to the largest expected spread decrease as the blocker (Lines 4-8) and then updates $CB$ and $B$ (Lines 9-10). 
Then we consider replacing the blockers in $B$ by the reversing order of their insertions (Line 11). We remove the replaced vertex from the blocker set (Line 12) and use Algorithm~\ref{algo:blockes} to compute the decrease of expected spread $\Delta[\cdot]$ for each candidate blocker (Line 13).
We use $x$ to record the vertex with the largest spread decrease computed so far (Line 14), by enumerating each of the candidate blockers (Lines 15-17). 
If the vertex to replace is current best blocker, we will early terminate the replacement (Lines 18-20). Algorithm~\ref{algo:greedyreplace} returns the set $B$ of $b$ blockers (Line 21).

The time complexity of GreedyReplace is $O(\min \{d^{out}_s,b\}\cdot \theta \cdot m\cdot \alpha(m,n))$. As the time complexity of Algorithm~\ref{algo:blockes} is mainly decided by the number of edges in the sampled graphs, thus in practice the time cost is much less than the worst case.



\subsection{\revise{Extension: IMIN Problem under Triggering Model}}

\revise{The triggering model is a generalization of both the IC model and the LT model~\cite{max-rr,first-max,Borgs-max}, which assumes that each vertex $u$ in $G$ is associated with a distribution $T(u)$ over subsets of $u$’s in-neighbors. 
For the given graph $G$, we can generate a sampled graph as follows: for each vertex $u$, we sample a triggering set of $u$ from $T(u)$, and remove each incoming edge of $u$ if the edge starts from a vertex not in the triggering set. With the sampled graphs, we can execute our AdvancedGreedy algorithm and GreedyReplace algorithm on them to solve the IMIN problem under the triggering model.}

\section{Experiments}
\label{sec:exp}

In this section, extensive experiments are conducted to validate the effectiveness and the efficiency of our algorithms.

\begin{table}[t]
	\caption{Statistics of Datasets}
	\centering%
	\small
	\resizebox{\linewidth}{!}{
	\begin{tabular}{l|ccccc}
	\toprule
		\bf{Dataset} &  \bf{$n$} & \bf{$m$} & \bf{$d_{avg}$} & \bf{$d_{max}$} & \bf{Type}  \\
		\midrule
		\textbf{E}mail\textbf{C}ore & 1,005 & 25,571 &  49.6 &544 & Directed\\
		\textbf{F}acebook & 4,039 & 88,234 & 43.7 & 1,045  & Undirected \\
		
		\textbf{W}iki-Vote & 7,115 & 103,689 & 29.1 & 1,167 & Directed\\

		\textbf{E}mail\textbf{A}ll & 265,214 & 420,045 & 3.2 & 7,636 & Directed\\
		\textbf{D}BLP & 317,080 & 1,049,866 & 6.6 & 343 & Undirected\\
		
		\textbf{T}witter & 81,306 & 1,768,149 & 59.5& 10,336  & Directed\\
		
		\textbf{S}tanford & 281,903 & 2,312,497 & 16.4 & 38,626 & Directed\\
		\textbf{Y}outube & 1,134,890 & 2,987,624 & 5.3 & 28,754 & Undirected\\
		
		\bottomrule
	\end{tabular}
	}
	\label{tab:dat}
\end{table}

\begin{figure}[t]
	\begin{minipage}{1\linewidth}
    \centering
    \includegraphics[width=\linewidth]{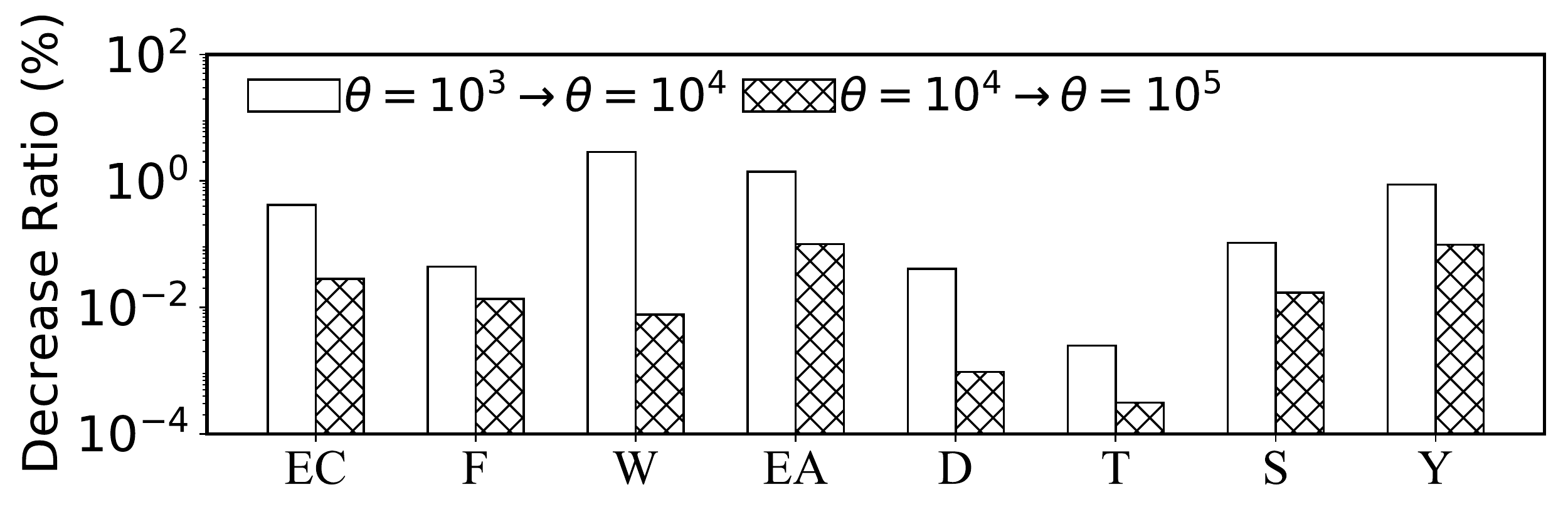}
    \caption{Expected Spread v.s. Number of Sampled Graphs}
    \label{fig:es-theta}
    \end{minipage}
 \vspace{4mm}
        
    \begin{minipage}{1\linewidth}
    \centering
    \includegraphics[width=\linewidth]{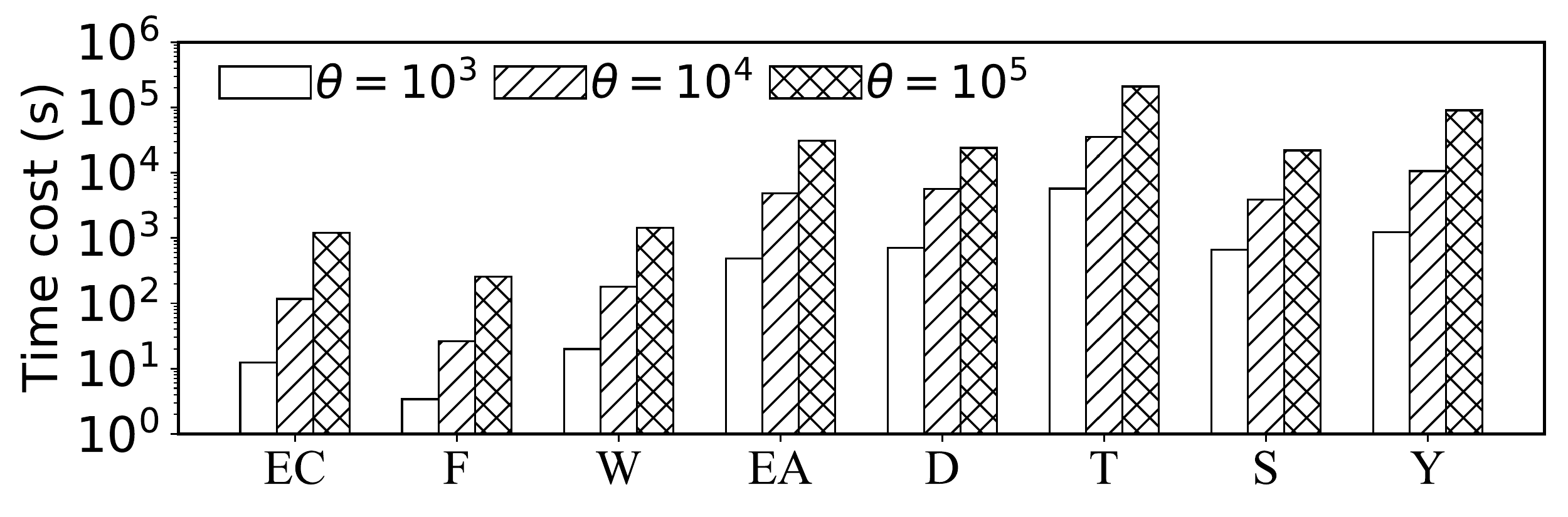}
    \caption{Running Time v.s. Number of Sampled Graphs}
    \label{fig:time-theta}
    \end{minipage}
    \vspace{-1mm}
\end{figure}



\subsection{Experimental Setting}

\label{sec:expset}

\noindent \textbf{Datasets.}
The experiments are conducted on $8$ datasets, obtained from SNAP (\url{http://snap.stanford.edu}). 
Table~\ref{tab:dat} shows the statistics of the datasets, ordered by the number of edges in each dataset, where $d_{avg}$ is the average vertex degree (the sum of in-degree and out-degree for each directed graph) and $d_{max}$ is the largest vertex degree.
For an undirected graph, we consider each edge as bi-directional.

\vspace{1mm}
\noindent \textbf{Propagation Models.} 
Following existing studies, e.g., \cite{maximization1,first-max}, we use two propagation probability models to assign the probability $p_{u,v}$ on each directed edge $(u,v)$: (i) Trivalency (TR) model, which assigns $p_{u,v}=TRI$ for each edge, where $TRI$ is uniformly selecting a value from $\{0.1,0.01,0.001\}$ \cite{maximization1,min-greedy-tree,JungHC12}; 
and (ii) Weighted cascade (WC) model, which assigns $p_{u,v}=1/d^{in}_v$~\cite{first-max,max-rr}.

\vspace{1mm}
\noindent \textbf{Setting.}
Unless otherwise specified, for Monte-Carlo Simulations, we set the number of rounds $r=10000$, and for our sampled graph based algorithm, we sample $10000$ graphs in the experiments. In each of our experiments, we independently execute each algorithm 5 times and report the average result. {\em By default, we terminate an algorithm if the running time reaches 24 hours}.  

\vspace{1mm}
\noindent \textbf{Environments.} The experiments are performed on a CentOS Linux serve (Release 7.5.1804) with Quad-Core Intel Xeon CPU (E5-2640 v4 @ 2.20GHz) and 128G memory. All the algorithms are implemented in C++. The source code is compiled by GCC(7.3.0) under O3 optimization.

\begin{table}[t]
    \centering
\caption{Exact v.s. GreedyReplace (TR Model)}\label{tab:exact-tr}
\resizebox{0.9\linewidth}{!}{
\begin{tabular}{c|c|c|c|c|c}
\toprule
\multirow{2}{*}{ b } & \multicolumn{3}{c|}{Expected Spread} & \multicolumn{2}{c}{Running Time (s)} \\ \cline{2-6}
& Exact & GR & Ratio & Exact & GR \\ \midrule
1 & 12.614 & 12.614 & 100\% & 3.07 & 0.12 \\
2 & 12.328 & 12.334 & 99.95\% & 130.91 & 0.21 \\
3 & 12.112 & 12.119 & 99.94\% & 3828.2 & 0.25 \\
4 & 11.889 & 11.903 & 99.88\% & 80050 & 0.33 \\ \bottomrule
\end{tabular}}
\end{table}

\begin{table}[t]
    \centering
\caption{Exact v.s. GreedyReplace (WC Model)}\label{tab:exact-wc}
\resizebox{0.9\linewidth}{!}{
\begin{tabular}{c|c|c|c|c|c}
\toprule
\multirow{2}{*}{ b } & \multicolumn{3}{c|}{Expected Spread} & \multicolumn{2}{c}{Running Time (s)} \\ \cline{2-6}
& Exact & GR & Ratio & Exact & GR \\ \midrule
1 & 11.185 & 11.185 & 100\% & 2.63 & 0.10 \\
2 & 11.077 & 11.078 & 99.99\% & 110.92 & 0.18 \\
3 & 10.997 & 10.998 & 99.99\% & 3284.0 & 0.23 \\
4 & 10.922 & 10.925 & 99.97\% & 69415 & 0.33 \\ \bottomrule
\end{tabular}}
\vspace{-1mm}
\end{table}

\vspace{1mm}
\noindent \textbf{Algorithms.}
In the experiments, we mainly compare our GreedyReplace algorithm and AdvancedGreedy algorithm with four basic algorithms (Exact, Rand, Out-degree and BaselineGreedy algorithm).

\noindent \underline{\texttt{Exact}}: identifies the optimal solution by searching all possible combinations of $b$ blockers, and uses Monte-Carlo Simulations with $r=10000$ to compute the expected spread of each candidate set of the blockers.

\noindent \underline{\texttt{Rand} (RA)}: randomly chooses $b$ blockers in the graph excluding the seeds. 

\noindent \underline{\texttt{OutDegree} (OD)}: selects $b$ vertices with the highest out-degrees as the blockers.



\noindent \underline{\texttt{BaselineGreedy} (BG)}: the state-of-the-art algorithm for the IMIN problem (Algorithm~\ref{algo:greedy})~\cite{min-greedy1,min-greedy2} that uses Monte-Carlo Simulations to compute the expected spread.

\noindent \underline{\texttt{AdvancedGreedy} (AG)}: Algorithm~\ref{algo:fast-greedy} which uses Algorithm~\ref{algo:blockes} to accelerate the BaselineGreedy algorithm.

\noindent \underline{\texttt{GreedyReplace} (GR)}: our GreedyReplace algorithm (Algorithm~\ref{algo:greedyreplace}).



\begin{table*}
\centering
\caption{Comparision with Other Heuristics (Expected Spread)}\label{table:heuristics}
\resizebox{\linewidth}{!}{
\begin{tabular}{c|cccc|cccc|cccc|cccc}
\toprule
\multirow{2}{*}{\centering b } & \multicolumn{4}{c|}{EmailCore (TR model)} & \multicolumn{4}{c|}{Facebook (TR model)} & \multicolumn{4}{c|}{Wiki-Vote (TR model)} & \multicolumn{4}{c}{EmailAll (TR model)} \\
& RA & OD & AG & GR & RA & OD & AG & GR & RA & OD & AG & GR & RA & OD & AG & GR \\
\midrule
20 & 354.88 & 230.10 & 220.59 & \bf{219.69} & 16.059 & 16.026 & 11.717 & \bf{11.691} & 512.62 & 325.51 & 131.30 & \bf{130.77} & 548.99 & 286.05 & 14.642 & \bf{13.640} \\
40 & 341.33 & 98.712 & 84.022 & \bf{83.823} & 16.037 & 16.019 & 10.416 & \bf{10.413} & 512.18 & 222.00 & 46.747 & \bf{43.898} & 546.94 & 221.97 & 10.319 & \bf{10.002} \\
60 & 325.13 & 47.249 & 35.085 & \bf{33.634} & 16.033 & 16.010 & 10.151 & \bf{10.149} & 507.11 & 138.60 & 25.514 & \bf{23.282} & 546.39 & 148.52 & 10 & 10 \\
80 & 304.90 & 30.277 & 19.001 & \bf{18.848} & 15.997 & 15.987 & 10.028 & \bf{10.026} & 501.49 & 32.646 & 17.332 & \bf{17.322} & 545.41 & 100.84 & 10 & 10 \\
100 & 283.54 & 22.696 & 13.640 & \bf{13.533} & 15.994 & 15.980 & 10.001 & 10.001 & 496.05 & 25.831 & 14.726 & \bf{14.518} & 544.59 & 55.398 & 10 & 10 \\
\bottomrule

\toprule
\multirow{2}{*}{\centering b } & \multicolumn{4}{c|}{DBLP (TR model)} & \multicolumn{4}{c|}{Twitter (TR model)} & \multicolumn{4}{c|}{Stanford (TR model)} & \multicolumn{4}{c}{Youtube (TR model)} \\
& RA & OD & AG & GR & RA & OD & AG & GR & RA & OD & AG & GR & RA & OD & AG & GR \\
\midrule
20 & 13.747 & 13.730 & 10.502 & \bf{10.499} & 16801 & 16610 & 16101 & \bf{16100} & 16.087 & 16.075 & 12.069 & \bf{10.48}3 & 14.774 & 14.762 & 14.743 & \bf{10.950} \\
40 & 13.739 & 13.725 & 10.079 & 10.079 & 16796 & 16470 & 15749 & \bf{15748} & 16.080 & 16.071 & 10.488 & \bf{10.234} & 14.773 & 14.755 & 10.075 & \bf{10.002} \\
60 & 13.737 & 13.721 & 10.012 & \bf{10.010} & 16786 & 16329 & 15447 & \bf{14972} & 16.071 & 16.040 & 10.136 & \bf{10.075} & 14.773 & 14.750 & 10 & 10 \\
80 & 13.720 & 13.714 & 10 & 10 & 16780 & 16175 & 14610 & \bf{14474} & 16.064 & 16.017 & 10.026 & \bf{10.019} & 14.767 & 14.742 & 10 & 10 \\
100 & 13.716 & 13.706 & 10 & 10 & 16771 & 16057 & 13619 & \bf{13181} & 16.052 & 15.989 & 10.009 & \bf{10.002} & 14.762 & 14.729 & 10 & 10 \\
\bottomrule

\toprule
\multirow{2}{*}{\centering b } & \multicolumn{4}{c|}{EmailCore (WC model)} & \multicolumn{4}{c|}{Facebook (WC model)} & \multicolumn{4}{c|}{Wiki-Vote (WC model)} & \multicolumn{4}{c}{EmailAll (WC model)} \\
& RA & OD & AG & GR & RA & OD & AG & GR & RA & OD & AG & GR & RA & OD & AG & GR \\
\midrule
20 & 82.605 & 54.907 & 53.516 & \bf{53.296} & 21.482 & 21.362 & 14.588 & \bf{14.554} & 24.102 & 22.660 & 17.765 & \bf{17.701} & 13.330 & 11.493 & 10.720 & \bf{10.455} \\
40 & 75.990 & 44.710 & 40.199 & \bf{40.093} & 21.456 & 21.360 & 12.425 & \bf{12.418} & 23.971 & 21.696 & 15.258 & \bf{15.222} & 13.234 & 11.447 & 10.111 & \bf{10.086} \\
60 & 69.947 & 37.561 & 31.891 & \bf{31.784} & 21.429 & 21.297 & 11.194 & \bf{11.187} & 23.899 & 20.409 & 13.749 & \bf{13.743} & 13.217 & 11.408 & 10 & 10 \\
80 & 64.154 & 32.580 & 26.094 & \bf{26.073} & 21.417 & 21.176 & 10.476 & \bf{10.474} & 23.763 & 12.798 & 12.654 & \bf{12.574} & 13.188 & 11.385 & 10 & 10 \\
100 & 57.170 & 24.959 & 21.926 & \bf{21.899} & 21.395 & 21.056 & 10.013 & \bf{10.012} & 23.757 & 12.711 & 12.138 & \bf{12.129} & 13.070 & 11.324 & 10 & 10 \\
\bottomrule

\toprule
\multirow{2}{*}{\centering b } & \multicolumn{4}{c|}{DBLP (WC model)} & \multicolumn{4}{c|}{Twitter (WC model)} & \multicolumn{4}{c|}{Stanford (WC model)} & \multicolumn{4}{c}{Youtube (WC model)} \\ 
& RA & OD & AG & GR & RA & OD & AG & GR & RA & OD & AG & GR & RA & OD & AG & GR \\
\midrule
20 & 118.23 & 118.17 & 32.602 & \bf{32.601} & 259.45 & 235.73 & 199.40 & \bf{198.76} & 25.803 & 25.800 & 11.742 & \bf{11.740} & 25.663 & 25.368 & 10.152 & \bf{10.113} \\
40 & 118.16 & 117.81 & 18.429 & \bf{18.409} & 258.86 & 226.28 & 170.20 & \bf{168.64} & 25.790 & 25.779 & 10.435 & \bf{10.398} & 25.585 & 25.330 & 10.012 & \bf{10.006} \\
60 & 118.09 & 117.73 & 11.869 & \bf{11.867} & 257.75 & 214.90 & 144.61 & \bf{144.23} & 25.777 & 25.771 & 10.119 & \bf{10.101} & 25.446 & 25.149 & 10 & 10 \\
80 & 117.97 & 117.59 & 10 & 10 & 255.48 & 204.17 & 129.53 & \bf{128.75} & 25.685 & 25.633 & 10.005 & \bf{10.002} & 25.346 & 25.113 & 10 & 10 \\
100 & 117.94 & 117.43 & 10 & 10 & 254.04 & 196.40 & 114.11 & \bf{112.07} & 25.657 & 25.603 & 10.002 & \bf{10.000} & 25.277 & 25.052 & 10 & 10 \\
\bottomrule
\end{tabular} }
\end{table*}

\subsection{Effectiveness}

\noindent
{\bf Varying the Number of Sampled Graphs.} In Figure~\ref{fig:es-theta} and Figure~\ref{fig:time-theta}, we vary $\theta$ (i.e., the number of sampled graphs for choosing the blocker in each round) from $10^3$, $10^4$ to $10^5$, and report the expected spread and running time of our GR algorithm. 
We evaluate on all datasets under the TR model by setting the blocker budget to $20$ and randomly selecting $10$ seed vertices.
We show the decrease ratio of expected spread of three $\theta$ values in Figure~\ref{fig:es-theta}, because the absolute differences of expected spreads are quite small.
The largest decrease ratio from $\theta=10^3$ to $\theta=10^4$ is only $2.89\%$, and the largest decrease ratio from $\theta=10^4$ to $\theta=10^5$ is less than $0.1\%$.
Figure~\ref{fig:time-theta} shows the running time gradually increases when $\theta$ increases. 
According to above results, we set $\theta=10^4$ in all the experiments for a good trade-off between the time cost and the accuracy. 

\vspace{1mm}
\noindent \textbf{Comparison with the Exact Algorithm.}
We also compare the result of GR with the Exact algorithm which identifies the optimal $b$ blockers by enumerating all possible combinations of $b$ vertices. Due to the huge time cost of Exact, we extract small datasets by iteratively extracting a vertex and all its neighbors, until the number of extracted vertices reaches $100$. For \texttt{EmailCore} under both WC and TR models, we extract $5$ such subgraphs. We randomly choose $10$ vertices as the seeds. As the graph is small, we can use the exact computation of the expected spread~\cite{MaeharaSI17} for comparison between Exact and GR.
Tables~\ref{tab:exact-tr} and \ref{tab:exact-wc} show that the expected spread of GR under both two influence propagation models is very close to the results of Exact while GR is faster than Exact by up to $6$ orders of magnitude.

\begin{figure}[t]
	\begin{minipage}{1\linewidth}
    \centering
    \includegraphics[width=\linewidth]{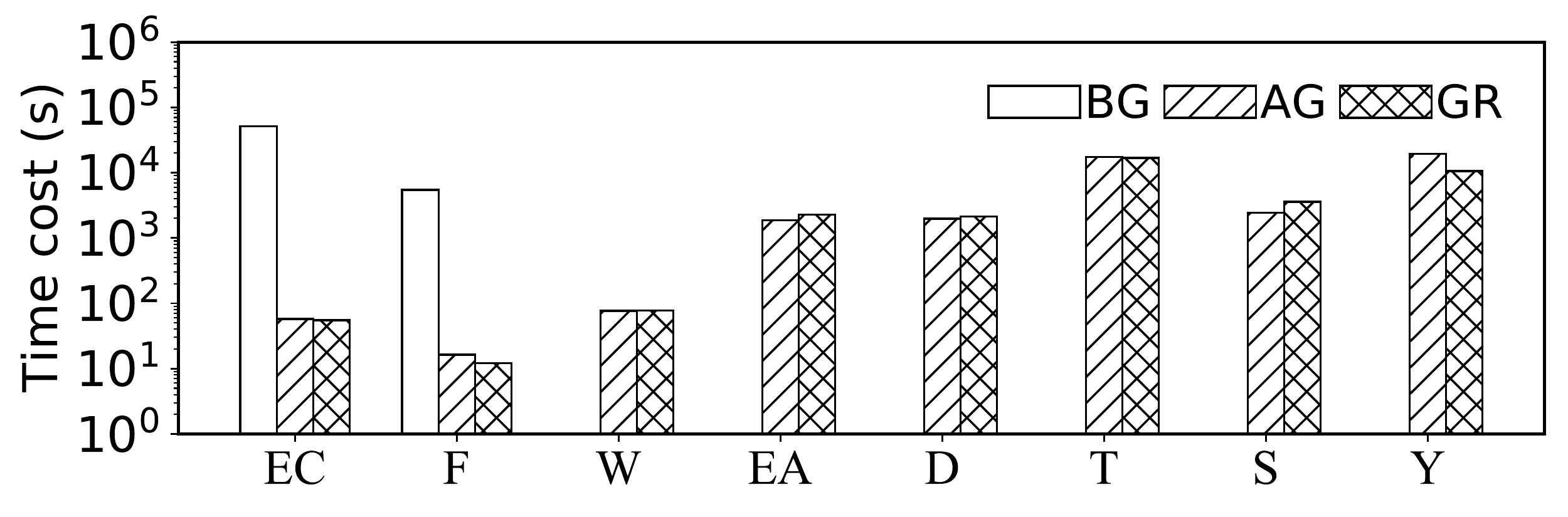}
    \caption{Time Cost of Different Algorithms under TR Model}
    \label{fig:time-tr}
    \end{minipage}
    \vspace{4mm}
    
	\begin{minipage}{1\linewidth}
    \centering
    \includegraphics[width=\linewidth]{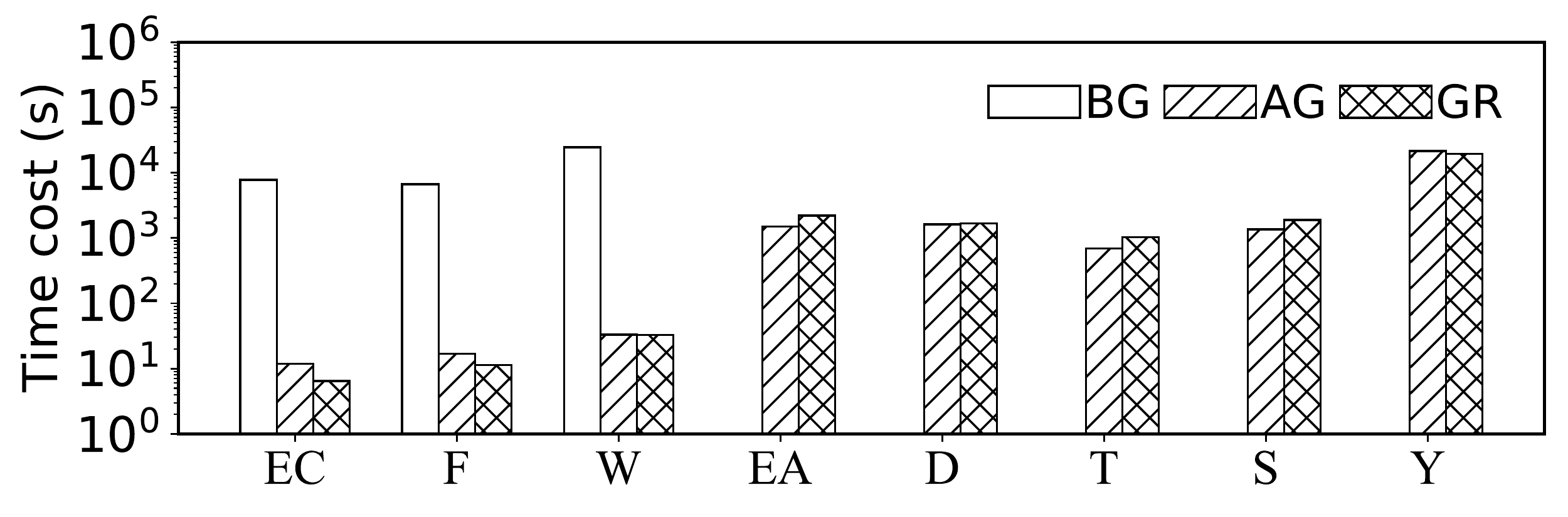}
    \caption{Time Cost of Different Algorithms under WC Model}
    \label{fig:time-wc}
    \end{minipage}
    \vspace{-1mm}
\end{figure}

\vspace{1mm}
\noindent \textbf{Comparison with Other Heuristics.}
As discussed in Section~\ref{sec:advanced-greedy}, the effectiveness (expected spread) of the AdvancedGreedy algorithm is the same as the BaselineGreedy algorithm. Thus, in this experiment, we compare Rand (RA), Out-degree (OD) with our AdvancedGreedy (AG) and GreedyReplace (GR) algorithms in Table~\ref{table:heuristics}. We first randomly select $10$ vertices as the seeds and vary the budget $b$ from $20,40,60,80$ to $100$. We repeat this process by $5$ times and report the average expected spread with the resulting blockers (the expected spread is computed by Monte-Carlo Simulations with $10^5$ rounds) on all datasets.
The results show that our GR algorithm always achieves the best result in both two propagation models (the smallest spread with different budgets), compared with RA, OD and AG. Besides, with the increase of budget $b$, the influence spread is better limited in GR.
The results verify that it is effective to first limit the candidate blockers in the out-neighbors of the seeds and then replace the candidates to improve the result.


\subsection{Efficiency}

\begin{figure}[t]
\centering
\begin{subfigure}[h]{0.48\linewidth}
    \includegraphics[width=1\columnwidth]{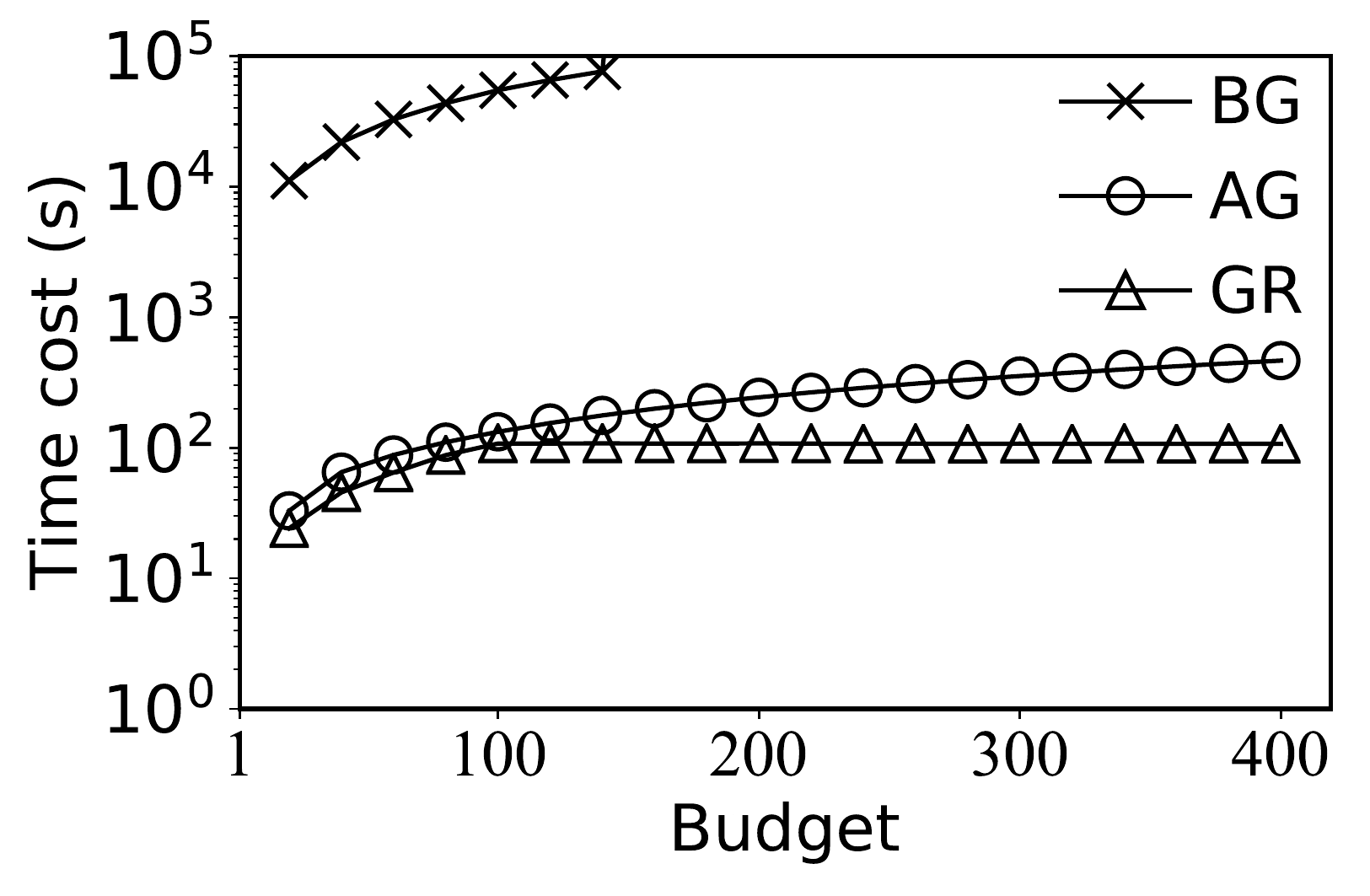}
    \caption{\texttt{Facebook} under TR model}\label{fig:time-budget-a}
\end{subfigure}
\begin{subfigure}[h]{0.48\linewidth}
    \includegraphics[width=1\columnwidth]{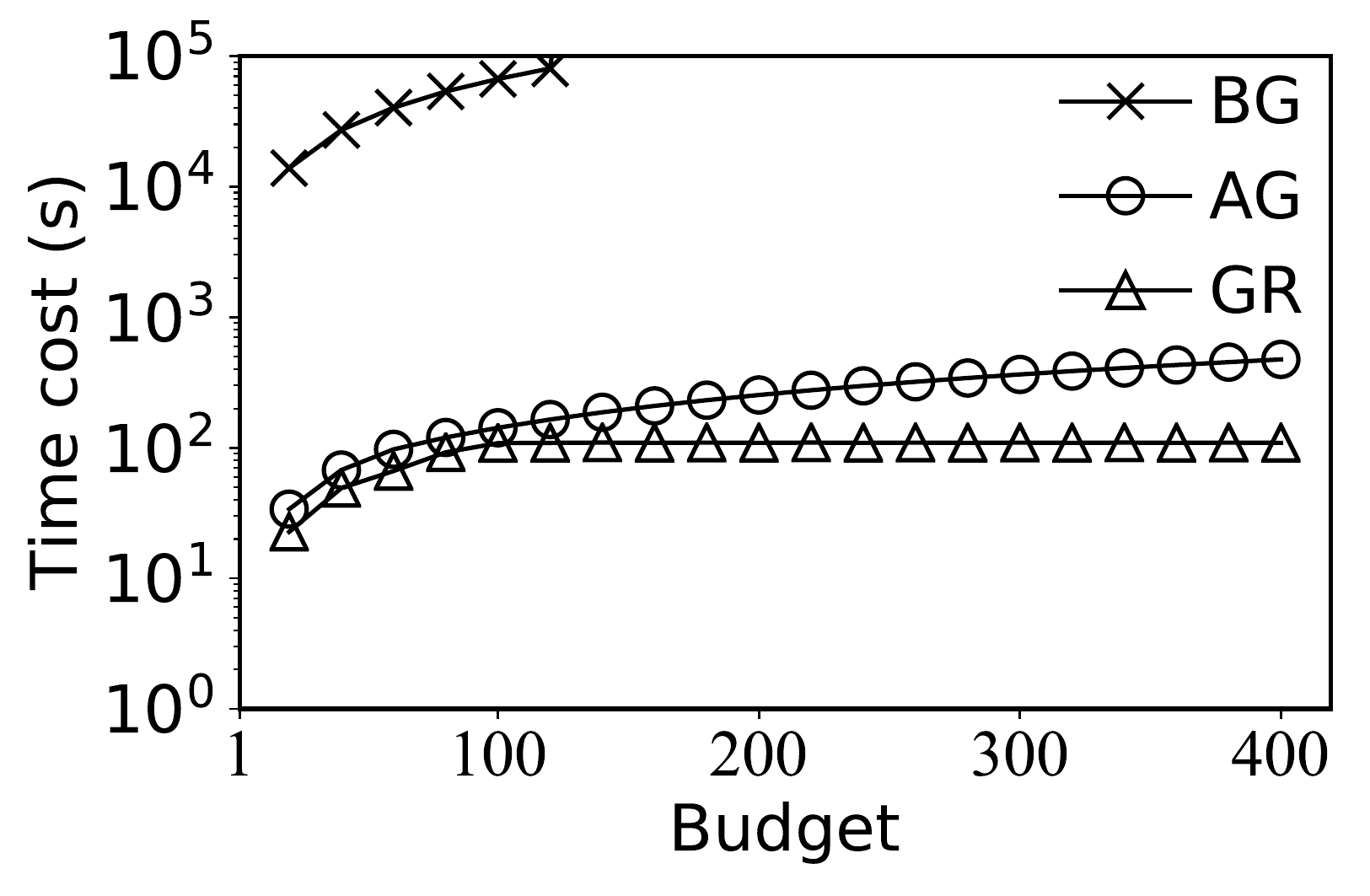}
    \caption{\texttt{Facebook} under WC model}\label{fig:time-budget-b}
\end{subfigure}
\begin{subfigure}[h]{0.48\linewidth}
    \includegraphics[width=1\columnwidth]{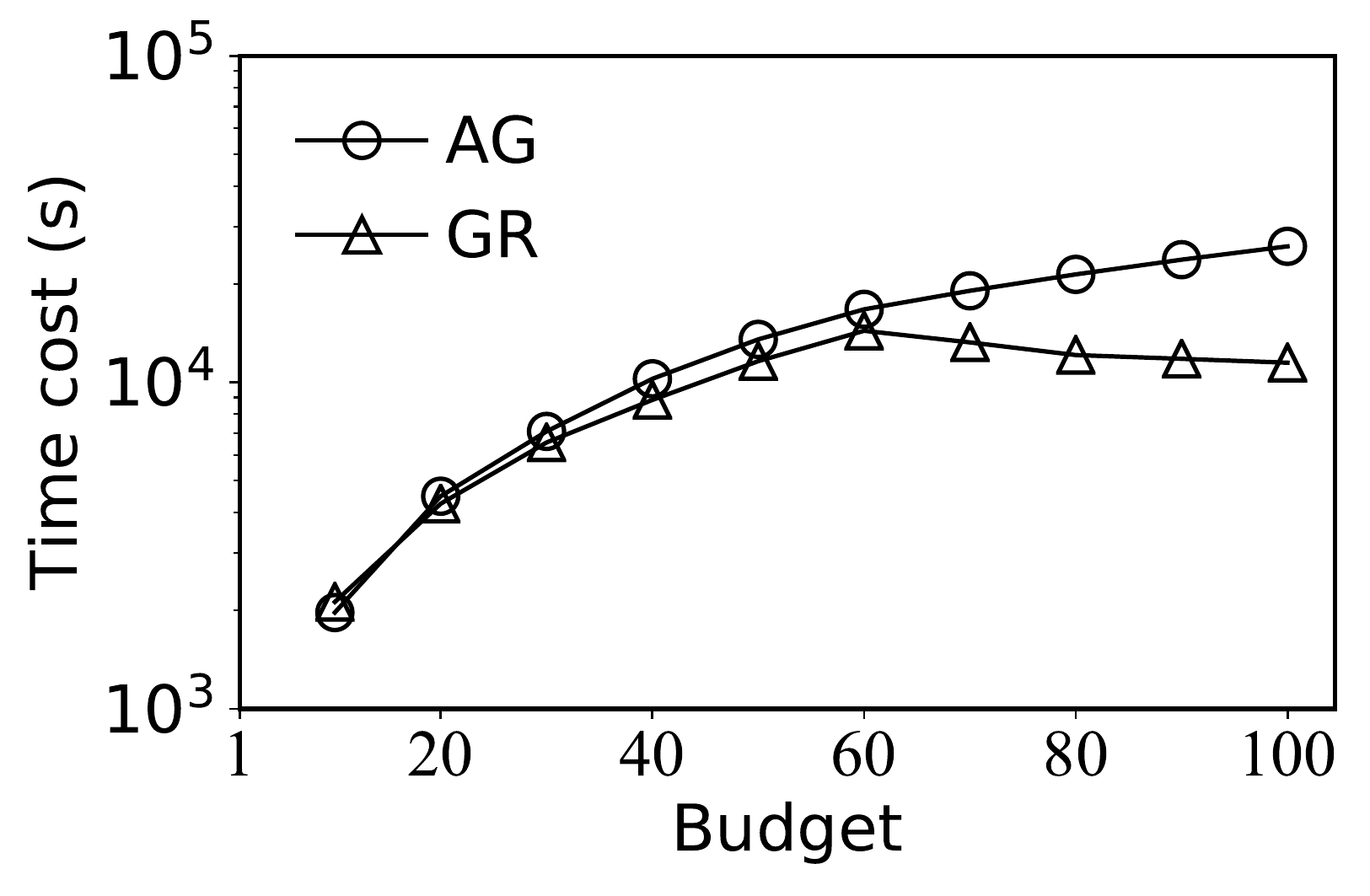}
    \caption{\texttt{DBLP} under TR model}\label{fig:time-budget-c}
\end{subfigure}
\begin{subfigure}[h]{0.48\linewidth}
    \includegraphics[width=1\columnwidth]{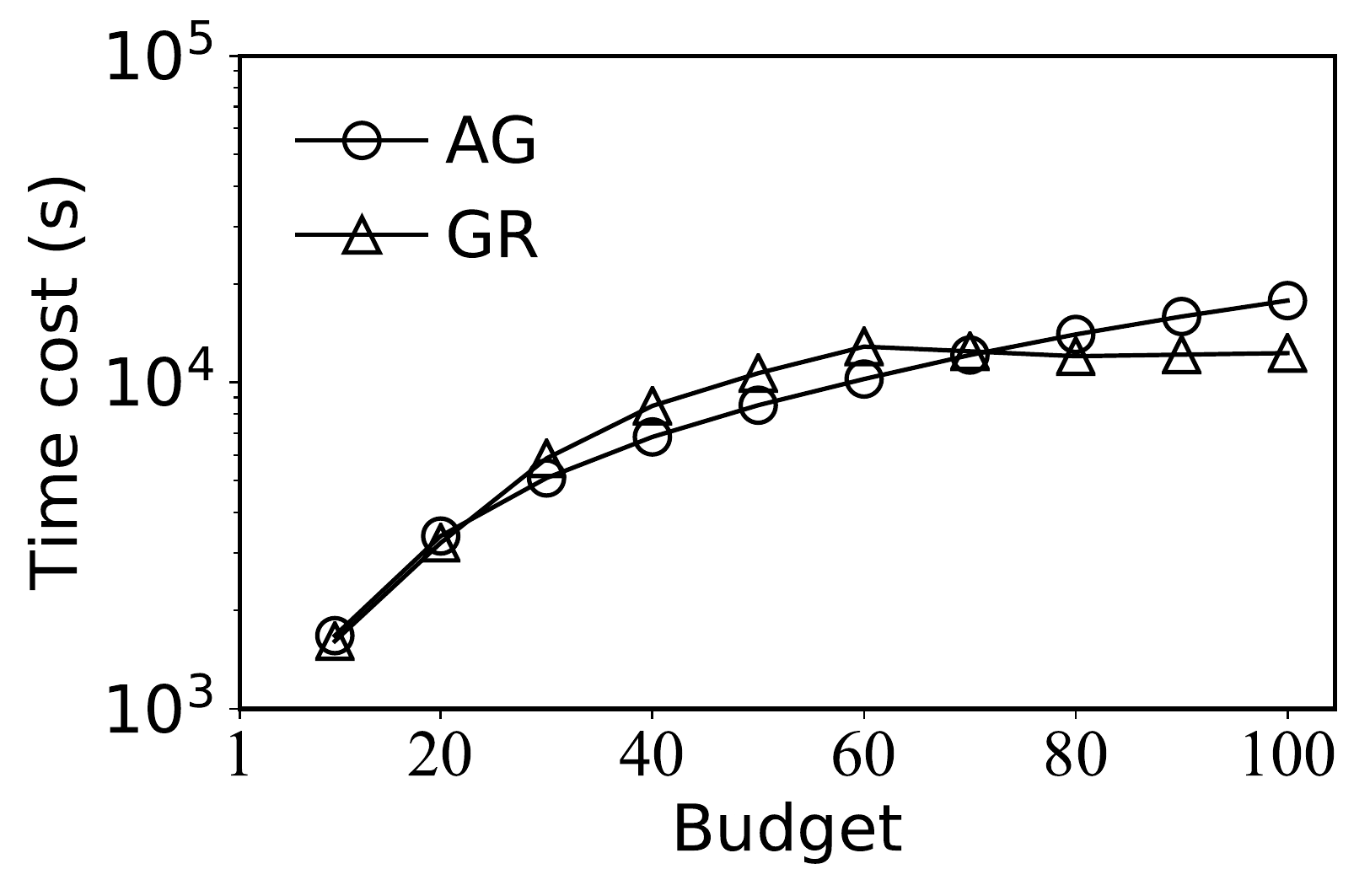}
    \caption{\texttt{DBLP} under WC model}\label{fig:time-budget-d}
\end{subfigure}
\caption{Running Time v.s. Budget}
\label{fig:time-budget}
    \vspace{-1mm}
\end{figure}

\noindent \textbf{Time Cost of Different Algorithms.}
Here we compare the running time of BG, AG and GR. We set the budget $b$ to $10$ due to the huge computation cost of the BG algorithm. Figures~\ref{fig:time-tr} and \ref{fig:time-wc} show the results in all dataset under the two propagation models. In $6$ datasets (resp. $5$ datasets) under the TR model (resp. WC model), BG cannot return results within the given time limit (i.e., $24$ hours).
The results show our AG and GR algorithms significantly outperform BG by at least $3$ orders of magnitude in runtime, and the gap can be larger on larger datasets which is consistent with the analysis of time complexities (Section~\ref{sec:advanced-greedy}). Besides, the time cost of GR is close to AG.

\vspace{1mm}
\noindent
{\bf Varying the Budget.}
Here we present the running time of \texttt{Facebook} and \texttt{DBLP} datasets by given different budgets in Figure~\ref{fig:time-budget}.
The running time of AG may decrease when the budget becomes larger due to the early termination applied in Algorithm~\ref{algo:greedyreplace} (Lines 19-20).
It is clear that AG and GR have much higher efficiency than BG, and the gap between them becomes even larger as the budget increases. We also find that the running time of AG is close to GR. AG may be faster than GR when the budget is small but GR performs better on the running time when the budget increases.

\begin{figure}[t]
	\begin{minipage}{1\linewidth}
    \centering
    \includegraphics[width=\linewidth]{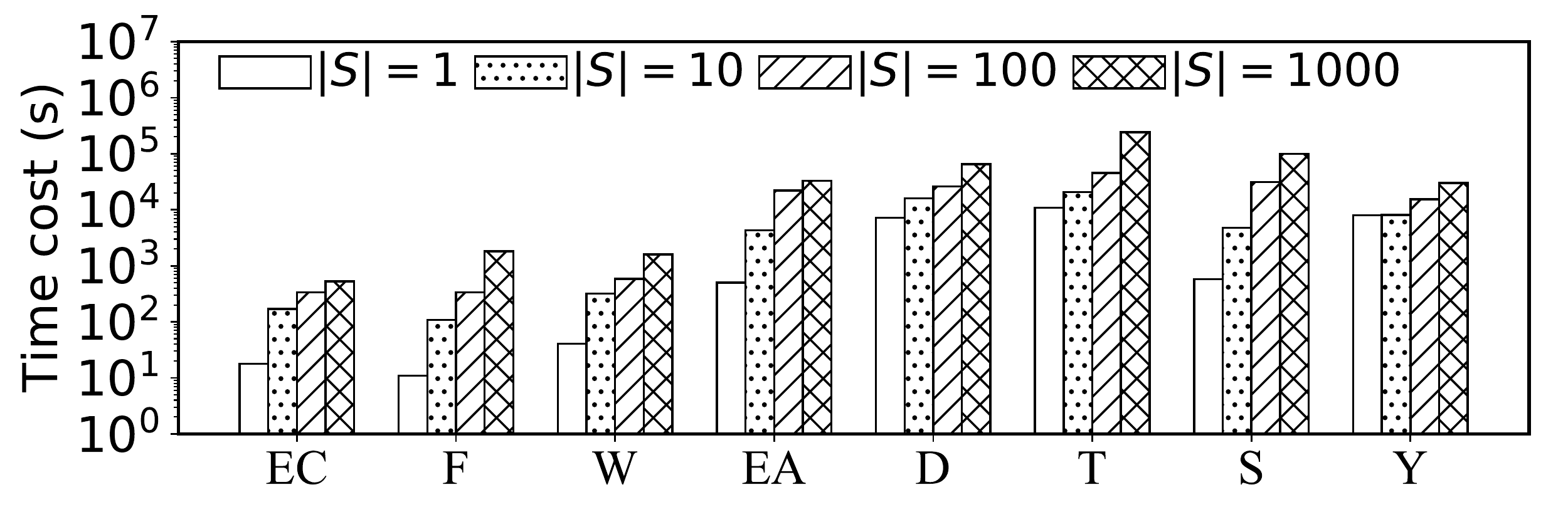}
    \caption{Running Time v.s. Number of Seeds (TR Model)}
    \label{fig:sca-tr}
    \end{minipage}
     \vspace{4mm}
       
 	\begin{minipage}{1\linewidth}
    \centering
    \includegraphics[width=\linewidth]{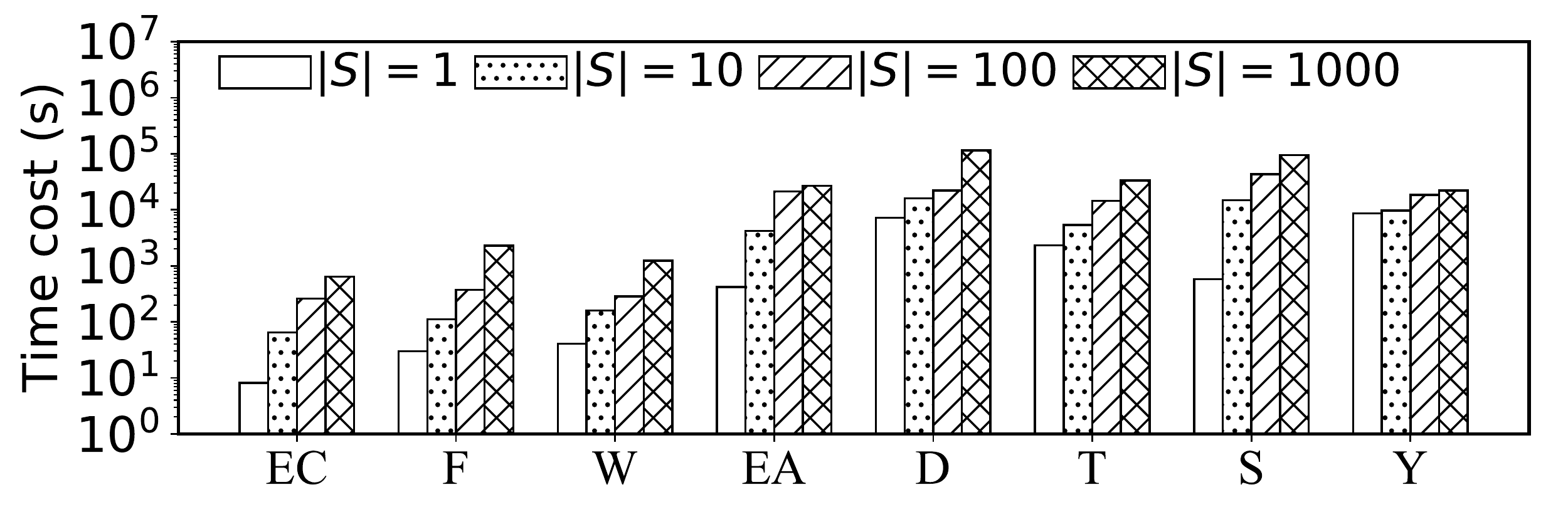}
    \caption{Running Time v.s. Number of Seeds (WC Model)}
    \label{fig:sca-wc}
    \end{minipage}   
    \vspace{-1mm}
\end{figure}

\vspace{1mm}
\noindent \textbf{Scalability.}
In Figure~\ref{fig:sca-tr} and Figure \ref{fig:sca-wc}, we test the scalability of our GR algorithm. We set the budget to $100$ and vary the number of seeds from $1,10,100$ to $1000$. We report the average time cost of the GR algorithm by executing it 5 times. We can see that the running time becomes larger as the number of seeds increases. It is because a large number of seeds leads to a wider influence spread (a larger size of sampled graphs), and the running time of Algorithm~\ref{algo:blockes} is highly related to the size of sampled graphs. 
We can also find that the increasing ratio of the running time is much less than the increasing ratio of the number of seeds, which validates that GR is scalable to handle the scenarios when the number of seeds is large.
\label{sec:exp-im}

\section{Conclusion}

Minimizing the influence of misinformation is critical for a social network to serve as a reliable platform. 
In this paper, we systematically study the influence minimization problem to blocker $b$ vertices such that the influence spread of a given seed set is minimized. 
We prove the problem is NP-hard and hard to approximate.
A novel spread estimation algorithm is first proposed to largely improve the efficiency of state-of-the-art without sacrificing the effectiveness. Then we propose the GreedyReplace algorithm to refine the effectiveness of the greedy method by considering a new heuristic. 
Extensive experiments on 8 real-life datasets verify that our GreedyReplace and AdvancedGreedy algorithms largely outperform the competitors. 
For future work, it is interesting to adapt our algorithm to other diffusion models and efficiently address other influence problems.


\section*{Acknowledgments}
Fan Zhang is partially supported by NSFC 62002073, NSFC U20B2046, Guangzhou Research Foundation (202102020675) and HKUST-GZU JRF (YH202202).
Wenjie Zhang is partially supported by ARC FT210100303 and ARC DP200101116. 
Xuemin Lin is partially supported by Guangdong Basic and Applied Basic Research Foundation (2019B1515120048). 

\bibliographystyle{IEEEtran}
\bibliography{ref}



\balance
\end{document}